\def\v #1{\vert #1\vert}             
\def\n #1 #2{(-1)^{{\v #1} {\v #2}}} 
\theoremstyle{plain}
\newtheorem{theorem}{Theorem}
\newtheorem{corollary}[theorem]{Corollary}
\newtheorem{proposition}[theorem]{Proposition}
\newtheorem{lemma}[theorem]{Lemma}
\theoremstyle{definition}
\newtheorem{definition}[theorem]{Definition}
\numberwithin{equation}{section}
\numberwithin{theorem}{section}
\def\<#1>{\langle#1\rangle}
\begin{document}
\centerline{\Large \bf Lie symmetries for Lie systems:}\vskip 0.35cm
\centerline{\Large \bf applications to systems of ODEs and PDEs} \vskip 0.7cm

\centerline{P.G. Est\'evez$^{a)}$, F.J. Herranz$^{b)}$, J. de Lucas$^{c)}$ and C.
Sard\'on$^{a)}$}
\vskip 0.5cm
\centerline{$^{a)}$Department of Fundamental Physics, University of
Salamanca,}
\centerline{Plza. de la Merced s/n, 37.008, Salamanca, Spain.}
\vskip 0.5cm
\centerline{$^{b)}$Department of Mathematical Methods in Physics, University of Warsaw,}
\centerline{ul. Hoza 74, 00-682, Warszawa, Poland.}
\vskip 0.5cm
\centerline{$^{c)}$Department of Physics, University of Burgos,}
\centerline{09001, Burgos, Spain.}

\vskip 1cm

\begin{abstract}
A {\it Lie system} is a nonautonomous system of first-order differential equations  admitting a {\it superposition rule}, i.e., a map expressing its general solution
in terms of a generic family of particular solutions and some constants. Using that a Lie system can be considered as a curve in a finite-dimensional Lie algebra of vector fields, a so-called {\it Vessiot--Guldberg Lie algebra}, we associate every Lie system with a Lie algebra of Lie point symmetries induced by the Vessiot--Guldberg Lie algebra. This enables us to derive Lie symmetries of relevant physical systems described by first- and higher-order systems of differential equations by means of Lie systems in an easier way than by standard methods. A generalization of our results to partial differential equations is introduced. Among other  applications, Lie symmetries for several new and known generalizations of the real Riccati equation are studied.
\end{abstract}

\section{Introduction}

The analysis of Lie systems dates back to the end of the XIXth century, when Vessiot, Guldberg and Lie \cite{Gu93,LS,Ve93} discovered the most fundamental properties of nonautonomous systems of first-order ordinary differential equations (ODEs) admitting a superposition rule.
Lie proved the nowadays called {\it Lie--Scheffers Theorem}, which states that a nonautonomous system of first-order ODEs admits a superposition rule if and only if it describes the integral curves of a $t$-dependent vector field taking values in a
finite-dimensional real Lie algebra $V$ of vector fields \cite{CGM07}. We call $V$ a {\it Vessiot--Guldberg Lie algebra} of the Lie system \cite{Dissertations}.

The theory of Lie systems was deeply analyzed towards the end of the XIXth century, to be almost forgotten soon after. Along the eighties of the XXth century, Winternitz and coworkers derived superposition rules for many relevant systems of differential and superdifferential equations and studied the existence of Lie systems on $\mathbb{R}^2$ among other topics \cite{BecHusWin86,PW}. 

Cari\~nena, Grabowski and Marmo characterized superposition rules as connections of zero curvature \cite{CGM07}. Other relevant results have also been derived for Lie systems admitting a Vessiot--Guldberg Lie algebra of Hamiltonian vector fields with respect to some geometric structure \cite{CGLS13,CLS13}. This fact led to the derivation of a superposition rule for Riccati equations in a simpler way by using a Casimir function of $C^\infty(\mathfrak{sl}(2,\mathbb{R})^*)$ \cite{BCHLS13}. These results illustrate that Lie systems enjoy relevant geometric properties.
Additionally, Lie systems can be applied in quantum/classical mechanics and control theory (see \cite{Dissertations} and references therein).  

The interest of Lie symmetries for the study of systems of differential equations is undoubtable. For instance, Lie symmetries allow us to write a differential equation in a simpler way. In this paper, we pioneer the study of Lie symmetries for Lie systems and their generalizations to higher-order ODEs (HODEs) and partial differential equations (PDEs): the {\it HODE and PDE Lie systems} \cite{CGM07,SIGMA}. As far as we know, only a few minor results about Lie symmetries for Lie systems had appeared before in \cite{CLS13}.
  
Given a Lie system possessing a Vessiot--Guldberg Lie algebra $V$, we study its Lie symmetries of the form
$$
Y=f_0(t)\frac{\partial}{\partial t}+\sum_{\alpha=1}^{r}f_\alpha(t)X_\alpha,
$$
where $f_0,\ldots,f_r$ are certain $t$-dependent functions and $X_1,\ldots,X_r$ form a basis for $V$.
We prove that $f_0,\ldots, f_r$ depend on the algebraic structure of $V$ and can be derived by solving a Lie system, the symmetry system of the Lie system under study, which can be expressed in coordinates in a simple way. This enables us to calculate Lie symmetries for Lie systems with isomorphic Vessiot--Guldberg Lie algebras simultaneously.

As an application, we study Lie symmetries for Riccati equations. We retrieve and generalize results given in \cite{GS10} to determine their Lie symmetries. Subsequently, we propose their generalization to the realm of split-complex and Study numbers \cite{Ro88,Yaglom}: the {\it Cayley--Klein Riccati equation}. We prove that the Cayley--Klein Riccati equation with real $t$-dependent coefficients is a Lie system admitting a Vessiot--Guldberg Lie algebra isomorphic to $\mathfrak{sl}(2,\mathbb{R})$, namely an {\it $\mathfrak{sl}(2,\mathbb{R})$-Lie system} \cite{Pi12}, and analyze its Lie symmetries with our methods. Next, we study quaternionic Riccati equations with real $t$-dependent coefficients. We prove that they are $\mathfrak{sl}(2,\mathbb{R})$-Lie systems and give a Lie system describing some of its Lie symmetries. Since all of the above mentioned Lie systems are $\mathfrak{sl}(2,\mathbb{R})$-Lie systems, our techniques enable us to obtain a common equation describing Lie symmetries for all of them. 

As a particular example, we generalize results concerning Lie symmetries for generalized and classical Darboux--Brioschi--Halphen systems \cite{Nu05}. Apart from the interest in these systems due to their applications \cite{CH03}, our example illustrates how Lie systems can be employed to investigate autonomous systems for the first time. Autonomous systems of first-order differential equations are Lie systems associated with a one-dimensional Vessiot--Guldberg Lie algebra. This implies that most of the methods to analyze Lie systems cannot be applied, as they only work properly for Vessiot--Guldberg Lie algebras of dimension two or higher (see \cite{Dissertations}). We show that this problem can be avoided by studying  autonomous systems as Lie systems related to Vessiot--Guldberg Lie algebras of dimension greater than one. Although finding such Lie algebras can be difficult, we succeeded in determining some of them for certain relevant Lie systems.   		

Subsequently, we illustrate how our results can be used to study {HODE Lie systems} \cite{CGL11}. Such systems are systems of higher-order differential equations that can be written as a Lie system by considering them as a first-order system in the usual way, namely by adding extra variables $v\equiv {\rm d}x/{\rm d}t$, $a\equiv {\rm d}v/{\rm d}t$, etc. Using this fact and the techniques of this work, we obtain {\it non-local Lie symmetries} for the initial HODE Lie system under study, where by non-local we mean Lie symmetries depending on the successive derivatives of the dependent variables. We also study Lie algebras of Lie point symmetries for HODE Lie systems, i.e., Lie symmetries that only depend on their independent and dependent variables. 

Finally, we extend our results to PDE Lie systems \cite{CGM07}. Similarly to Lie systems, PDE Lie systems can be associated with a Vessiot--Guldberg Lie algebra. Also, they can be related to a Lie algebra of Lie symmetries of a certain type which is
again determined by the Vessiot--Guldberg Lie algebra of the initial system. As before, this permits us to calculate simultaneously Lie symmetries for all PDE Lie systems possessing isomorphic Vessiot--Guldberg Lie algebras by solving another PDE Lie system. To illustrate our results, we analyze PDE Lie systems of relevance in physics and mathematics, e.g., multidimensional Riccati equations and PDEs describing certain flat connection forms \cite{FGRSZ99}. It is remarkable that the literature on PDE Lie systems is scarce (specially concerning their applications) and this paper contributes to enlarge their applications and to understand their properties \cite{Dissertations}.

The structure of the paper goes as follows. In Section 2, we briefly review the concept of Lie systems and PDE Lie systems.
Section 3 is devoted to the analysis of Lie symmetries for Lie systems. In Section 4 we provide methods to build different types of Lie algebras of Lie symmetries for Lie systems. Subsequently, we apply our results to Lie systems and HODE Lie systems of relevance in Section 5. We extend our methods to PDE Lie systems in Sections 6 and 7. In particular, we
calculate Lie symmetries for PDE Lie systems with the same symmetry system and physical and mathematical interest. In Section 8, we summarize our results and we describe our plans for future research.

\section{Fundamentals on Lie systems}

For simplicity, we hereafter assume all geometric structures to be real, smooth
and globally defined. In this way, we highlight the key points of our
presentation by omitting the analysis of minor technical problems (we refer to \cite{CGM07,Dissertations} for additional details). Subsequently, $N$ is assumed to be a real $n$-dimensional manifold with a local coordinate system $\{x_1,\ldots,x_n\}$.

Let $\tau:{\rm
T}N\rightarrow N$ be the natural projection of the tangent bundle ${\rm T}N$ onto $N$ and let
$\pi_2:(t,x)\in\mathbb{R}\times N\mapsto x\in N$ be a
projection. A {\it $t$-dependent vector field} $X$ on $N$ is a map
$X:(t,x)\in \mathbb{R}\times N\mapsto X(t,x)\in \mathbb{\rm
T}N$ such that $\tau\circ X=\pi_2$. This condition implies that every
$t$-dependent vector field $X$ on $N$ can be considered as a family
$\{X_t\}_{t\in\mathbb{R}}$ of vector fields $X_t:x\in N\mapsto
X_t(x)\equiv X(t,x)\in{\rm T}_xN\subset {\rm T}N$ on $N$ and vice versa. We call $\Gamma({\rm T}N)$ the $C^\infty(N)$-module of sections of the tangent bundle $({\rm T}N,N,\tau)$.

We call {\it integral curve} of $X$ an integral curve $\gamma:s\in\mathbb{R}\mapsto
(t(s),x(s))\in \mathbb{R}\times N$ of the {\it suspension} of $X$, i.e., the
vector field
\begin{equation}\label{xx}
\begin{array}{rccc}
\bar{X}:&\mathbb{R}\times N &\longrightarrow &{\rm T}(\mathbb{R}\times N)\simeq {\rm T}\mathbb{R}\oplus TN\\
&(t,x) &\mapsto &\frac{\partial}{\partial t}+X(t,x).
\end{array}
\end{equation}
The integral curves of $X$ of the form $(t,x(t))$ are particular solutions of the system
\begin{equation}\label{LieSystem}
\frac{{\rm d}x_i}{{\rm d}t}=X_i(t,x),\qquad i=1,\ldots,n.
\end{equation}
This is the so-called {\it associated system} with $X$.
Conversely, a system of this type describes the integral curves of the form $(t,x(t))$ of a unique $t$-dependent vector field on $N$,
namely,
\begin{equation}\label{TVF}
X(t,x)=\sum_{i=1}^nX_i(t,x)\frac{\partial}{\partial x_i}.
\end{equation}
This justifies to employ
the symbol $X$ to refer to both a $t$-dependent vector field and its associated
system.

\begin{definition} A {\it superposition rule} for a system $X$ on $N$ is a function $\Phi:N^{m}\times N\rightarrow N$ of the form $x=\Phi(x_{(1)}, \ldots,x_{(m)};k)$
satisfying that the general
 solution $x(t)$ of $X$ can be written as
  \begin{equation}\label{SupRule}
 x(t)=\Phi(x_{(1)}(t), \ldots,x_{(m)}(t);k),
  \end{equation}
where $x_{(1)}(t),\ldots,x_{(m)}(t)$ is any generic family of
particular solutions to $X$ and $N\ni k=(k_1\ldots,k_n)$ can be understood as a set of constants to be related
to initial conditions of $X$.
 \end{definition}

\begin{theorem}{\bf (The Lie--Scheffers Theorem)} A system $X$ on $N$ admits a superposition rule if and only if
\begin{equation}\label{LieDecom}
X(t,x)=\sum_{\alpha=1}^rb_\alpha (t)X_\alpha(x)
\end{equation}
for certain $t$-dependent functions
$b_1,\ldots,b_r$ and vector fields $X_1,\ldots,X_r$ on $N$ spanning an
$r$-dimensional real Lie algebra: a Vessiot--Guldberg Lie algebra of $X$.
\end{theorem}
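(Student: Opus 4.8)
The plan is to prove both implications of the Lie--Scheffers Theorem, treating the ``if'' direction constructively and the ``only if'' direction via a characterization of superposition rules in terms of vector fields on $N^{m+1}$. For the ``if'' direction, I would assume $X(t,x)=\sum_{\alpha=1}^r b_\alpha(t)X_\alpha(x)$ with the $X_\alpha$ spanning an $r$-dimensional real Lie algebra $V$. The key observation is that the \emph{diagonal prolongation} of $X$ to $N^{m+1}$, i.e.\ the $t$-dependent vector field $\widetilde{X}=\sum_{\alpha=1}^r b_\alpha(t)\widetilde{X}_\alpha$ where $\widetilde{X}_\alpha$ acts simultaneously on each of the $m+1$ copies of $N$, again takes values in a finite-dimensional Lie algebra (isomorphic to a quotient of $V$), because diagonal prolongation is a Lie algebra morphism. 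One then chooses $m$ large enough so that the diagonal prolongations $\widetilde{X}_\alpha$ admit $n$ common first integrals $F_1,\ldots,F_n$ on $N^{m+1}$ that are functionally independent and have the property that the map $(x_{(0)},\ldots,x_{(m)})\mapsto(x_{(1)},\ldots,x_{(m)},F_1,\ldots,F_n)$ can be locally inverted with respect to $x_{(0)}$; solving $F_j(x_{(0)},x_{(1)},\ldots,x_{(m)})=k_j$ for $x_{(0)}$ yields the desired $\Phi$. Since the $F_j$ are first integrals of every $\widetilde{X}_\alpha$, they are first integrals of $\widetilde{X}$ for all $t$, hence constant along any curve $(x_{(0)}(t),\ldots,x_{(m)}(t))$ built from solutions of $X$, which is exactly the superposition property.

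For the ``only if'' direction, I would start from the existence of a superposition rule $\Phi$ and extract from it the $n$ first integrals of the diagonal prolongation: writing $x_{(0)}=\Phi(x_{(1)},\ldots,x_{(m)};k)$ and solving for $k=k(x_{(0)},x_{(1)},\ldots,x_{(m)})$ gives $n$ common first integrals of $\widetilde{X}_t$ for every $t$, functionally independent and independent of the $x_{(1)},\ldots,x_{(m)}$ block in the appropriate sense. The fact that these $n$ functions are first integrals of $\widetilde{X}_t$ for \emph{all} $t$ forces strong constraints: the distribution $\mathcal{D}$ on $\mathbb{R}\times N^{m+1}$ spanned by $\partial_t$ together with all the $\widetilde{X}_t$ (equivalently, by $\bar{X}$ and its $t$-derivatives) has rank at most $mn+1$, i.e.\ the $\widetilde{X}_t$ and all their successive $t$-derivatives span, at each point, a space of dimension $\le mn$. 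Pulling this back to $N$ via a generic point of the remaining factors, one deduces that the vector fields $X_t$ together with all iterated derivatives $\mathrm{d}^k X_t/\mathrm{d}t^k$ span a finite-dimensional space of vector fields on $N$; call its span $V_0$. One then shows $V_0$ is closed under Lie bracket — this uses that the flows of the $X_t$ preserve the first integrals, so brackets of the generators again annihilate the $F_j$, keeping us inside a finite-dimensional family — and finally enlarges $V_0$ if necessary to a Lie algebra $V$ containing all $X_t$, giving the decomposition \eqref{LieDecom}.

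The main obstacle, and the part requiring genuine care rather than routine bookkeeping, is the ``only if'' direction: converting the purely pointwise rank bound coming from the existence of $n$ functionally independent common first integrals into the \emph{global, finite-dimensional Lie-algebraic} conclusion that the $X_t$ live in a fixed finite-dimensional Lie algebra $V$. In particular one must argue that the span of $\{X_t\}$ together with their $t$-derivatives is not merely pointwise finite-dimensional but genuinely finite-dimensional as a space of vector fields, and then that this span (or a minimal Lie algebra containing it) is closed under the bracket; this is where the interplay between the autonomous vector fields $X_\alpha$ and the $t$-dependence of the coefficients $b_\alpha(t)$ must be handled delicately, and where a careful choice of $m$ and of generic base point is essential. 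The ``if'' direction is comparatively mechanical once one knows that diagonal prolongation preserves finite-dimensionality of the Lie algebra and that a sufficiently prolonged system of vector fields spanning a finite-dimensional Lie algebra has enough common first integrals; I would cite the relevant integrability/Frobenius-type argument rather than reprove it. I will follow the treatment of \cite{CGM07}, adapting notation to the conventions fixed above.
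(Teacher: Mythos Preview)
The paper does not prove this theorem at all: the Lie--Scheffers Theorem is stated in Section~2 (``Fundamentals on Lie systems'') as a classical result, with a reference to \cite{CGM07} for details, and is used throughout as background. Your proposal follows precisely the approach of \cite{CGM07} --- diagonal prolongations, common first integrals of the prolonged system, and the rank/distribution argument for the converse --- which is exactly the source the paper points to, so in that sense your sketch is aligned with what the authors would regard as the proof; there is simply nothing in the paper itself to compare it against.
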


If $X$ is a Lie system admitting a Vessiot--Guldberg Lie algebra isomorphic to a Lie algebra $\mathfrak{g}$, we call $X$ a $\mathfrak{g}$-Lie system \cite{Pi12}.

Lie systems and superposition rules can be generalized to PDEs as follows \cite{CGM07}. Consider a system of PDEs 
\begin{equation}\label{pde}
 \frac{\partial x_i}{\partial t_l}=X_{li}(t,x),\qquad i=1,\ldots,n,\qquad l=1,\ldots,s,
\end{equation}
where the $X_{li}:\mathbb{R}^s\times \mathbb{R}^n\mapsto \mathbb{R} \in X(t,x)$ are arbitrary functions.
Its particular solutions are sections $x:t\equiv (t_1,\ldots,t_s) \in\mathbb{R}^s\mapsto x(t)\in N$ of the bundle $(\mathbb{R}^s \times N,N,\pi_s:(t,x)\in \mathbb{R}^s\times N\mapsto x\in N)$.  In particular, we recover the simple case of ODEs when $s=1$.

Let us assume $t\in \mathbb{R}^s$. We call {\it $t$-dependent vector field} $X$ on $N$ a map $X:\mathbb{R}^s\times N\rightarrow {\rm T}N$ such that $\tau\circ X=\pi_s$.  We call $\bar{X}^l$, with $l=1,\ldots,s$, the {\it autonomization} of $X$ with respect to $t_l$, i.e.,
\begin{equation}\label{xx2}
 \bar{X}^l(t,x)\equiv\frac{\partial}{\partial t_l}+X(t,x).
\end{equation}

\begin{definition}
 A system of PDEs \eqref{pde} is a {\it PDE Lie system} if:
 \begin{itemize}
  \item  There exist vector fields $X_{1},\ldots,X_{r}$ on $N$ spanning an $r$-dimensional real Lie algebra $V$ satisfying that each $t$-dependent vector field $X_{l}(t,x)=\sum_{i=1}^n X_{li}(t,x){\partial}/{\partial x_i}$, where $l=1,\ldots,s$,
 can be written in the form
\begin{equation*}
 X_{l}(t,x)=\sum_{\alpha=1}^r{b_{{\alpha l}}(t)X_{\alpha}(x)},\qquad l=1,\dots,s
\end{equation*}
for certain $t$-dependent functions $b_{\alpha l}$.
\item Let $c_{\alpha\beta\gamma}$ be the structure constants of $X_1,\ldots,X_r$, i.e., $[X_\alpha,X_\beta]=\sum_{\gamma=1}^rc_{\alpha\beta\gamma}X_\gamma$. Then,
\begin{equation*}
 \frac{\partial b_{{\gamma} k}}{\partial t_l}(t)-\frac{\partial b_{{\gamma} l}}{\partial t_k}(t)+\sum_{\alpha,\beta=1}^rb_{{\alpha} l}(t)b_{{\beta} k}(t)c_{\alpha \beta \gamma}=0,\quad \quad l\neq k=1,\ldots,s,\quad \gamma=1,\ldots,r.
\end{equation*}
\end{itemize}
If these conditions are satisfied, we call $V$ a {\it Vessiot--Guldberg Lie algebra for the PDE Lie system (\ref{pde})}.
\end{definition}

\section{On certain Lie symmetries for Lie systems}

We are now concerned with the study of certain Lie symmetries for Lie systems. We prove that the features of these Lie symmetries are determined by the algebraic structure of a Vessiot--Guldberg Lie algebra of the Lie system. 

Consider again the Lie system $X$ given by (\ref{LieSystem}) admitting a Vessiot--Guldberg Lie algebra $V$ with basis $X_1,\ldots,X_r$ and structure constants $c_{\alpha\beta\gamma}$. Then, $X$ can be brought into the form (\ref{LieDecom})
for some $t$-dependent functions $b_1,\ldots,b_r$ (with $t\in\mathbb{R}$).
Let us study its Lie symmetries of the form
\begin{equation}\label{LieSym}
Y=f_0(t)\frac{\partial}{\partial t}+\sum_{\alpha=1}^rf_\alpha(t)X_\alpha,
\end{equation}
where $f_0,\ldots,f_r$ are certain $t$-dependent functions.  We denote by $\mathcal{S}^V_X$ the space of such Lie symmetries.
Recall that $Y\in\mathcal{S}^V_X$ if and only if
\begin{equation}\label{consym}
[Y,\bar X]=h \bar X
\end{equation}
for a function $h\in C^\infty(\mathbb{R}\times N)$, with $\bar X$ being the autonomization of $X$ given by (\ref{xx})  \cite{Olver}. From this, it follows that $\mathcal{S}^V_X$ is a real Lie algebra of vector fields.

Using properties of $\mathcal{S}_X^V$ above, we now characterize the elements of $\mathcal{S}^V_X$ as particular solutions of a family of Lie systems.

\begin{lemma}\label{SLS} The vector field $Y$ of the form (\ref{LieSym}) is a Lie symmetry for the Lie system (\ref{LieDecom}) if and only if the $t$-dependent functions $f_0,\ldots,f_r$ satisfy the system of differential equations
\begin{equation}\label{SysFun}
\frac {{\rm d}f_0}{{\rm d}t}=b_0(t),\quad \frac{{\rm d}f_\alpha}{{\rm d}t}=f_0\frac{{\rm d}b_\alpha}{{\rm d}t}(t)+b_\alpha(t)b_0(t) +\sum_{\beta,\gamma=1}^rb_\beta(t) f_\gamma c_{\gamma\beta\alpha},\qquad \alpha=1,\ldots,r,
\end{equation}
for a certain $t$-dependent function $b_0$.
\end{lemma}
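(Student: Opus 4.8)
The plan is to argue straight from the characterisation (\ref{consym}): a vector field $Y$ of the form (\ref{LieSym}) lies in $\mathcal{S}^V_X$ precisely when $[Y,\bar X]=h\bar X$ for some $h\in C^\infty(\mathbb{R}\times N)$, where $\bar X=\partial/\partial t+\sum_{\beta=1}^rb_\beta(t)X_\beta$. So the first step is to compute $[Y,\bar X]$ explicitly. Expanding by bilinearity and the Leibniz identity $[gA,B]=g[A,B]-(Bg)A$, and using that (i) the functions $f_0,\ldots,f_r,b_1,\ldots,b_r$ depend on $t$ only, hence are annihilated by every $X_\alpha$; (ii) the $X_\alpha$ are $t$-independent, hence $[\partial/\partial t,X_\alpha]=0$; and (iii) $[X_\alpha,X_\beta]=\sum_\gamma c_{\alpha\beta\gamma}X_\gamma$, one obtains after a short computation
\[
[Y,\bar X]=-\frac{{\rm d}f_0}{{\rm d}t}\,\frac{\partial}{\partial t}+\sum_{\gamma=1}^r\left(f_0\,\frac{{\rm d}b_\gamma}{{\rm d}t}-\frac{{\rm d}f_\gamma}{{\rm d}t}+\sum_{\alpha,\beta=1}^rf_\alpha b_\beta\, c_{\alpha\beta\gamma}\right)X_\gamma.
\]

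Next I would impose $[Y,\bar X]=h\bar X=h\,\partial/\partial t+\sum_{\gamma=1}^rhb_\gamma X_\gamma$ and match components. Using the splitting ${\rm T}(\mathbb{R}\times N)\simeq{\rm T}\mathbb{R}\oplus{\rm T}N$ to isolate the $\partial/\partial t$ component, and the $\mathbb{R}$-linear independence of $X_1,\ldots,X_r$ as vector fields (they form a basis of $V$) to read off the coefficient of each $X_\gamma$, this single vector-field identity is equivalent to $r+1$ scalar equations. Comparing the $\partial/\partial t$ coefficients yields $h=-{\rm d}f_0/{\rm d}t$, which in particular depends on $t$ alone; writing $b_0:={\rm d}f_0/{\rm d}t$ reproduces the first equation of (\ref{SysFun}), and substituting $h=-b_0$ into the remaining $r$ comparisons and relabelling the contracted index ($\alpha\leftrightarrow\gamma$ in the structure-constant term) turns them into the second family of equations in (\ref{SysFun}). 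For the converse, if (\ref{SysFun}) holds for some $t$-dependent $b_0$, its first equation forces $b_0={\rm d}f_0/{\rm d}t$, and running the computation backwards with $h:=-b_0$ gives $[Y,\bar X]=h\bar X$, hence $Y\in\mathcal{S}^V_X$.

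The argument is essentially bookkeeping, so there is no serious obstacle; the only delicate points are keeping signs straight and tracking which slot of $c_{\alpha\beta\gamma}$ gets contracted with $f$ and which with $b$, together with the routine but essential observation that the $\mathbb{R}$-linear independence of $X_1,\ldots,X_r$ as vector fields---which is part of what ``basis of the Vessiot--Guldberg Lie algebra $V$'' means---is what allows passage from the identity $[Y,\bar X]=h\bar X$ to the coordinate system (\ref{SysFun}).
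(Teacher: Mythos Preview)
Your proposal is correct and follows essentially the same route as the paper: compute $[Y,\bar X]$ directly, set it equal to $h\bar X$, read off $h=-{\rm d}f_0/{\rm d}t$ from the $\partial/\partial t$ component, and then use the $\mathbb{R}$-linear independence of $X_1,\ldots,X_r$ (together with the fact that all remaining coefficients are functions of $t$ alone once $h$ is known) to obtain (\ref{SysFun}). Your treatment of the independence step---first splitting off $\partial/\partial t$ via ${\rm T}(\mathbb{R}\times N)\simeq{\rm T}\mathbb{R}\oplus{\rm T}N$, then invoking the basis property of the $X_\alpha$---is slightly more explicit than the paper's, but the argument is the same.
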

\begin{proof}

From (\ref{LieDecom}), (\ref{LieSym}) and (\ref{consym}), we have that
$$
\begin{aligned}
\!\!\left[Y,\bar{X}\right]\!\!&=\!\!\left[f_0\frac{\partial}{\partial t}\!+\!\sum_{\alpha=1}^
{r}f_\alpha X_\alpha,\frac{\partial}{\partial
t}\!+\!\sum_{\beta=1}^rb_\beta X_\beta\right]\!\!\\
&=-\frac{{\rm d}f_0}{{\rm d}t}\frac{\partial}{\partial t}+\!\!\sum_{\alpha=1}^r\left(\left(f_0\frac{{\rm d} b_\alpha}{{\rm d} t}-\frac{{\rm d} f_\alpha}{{\rm d}
t}\right)X_\alpha+
\sum_{\beta=1}^rb_\beta f_\alpha[X_\alpha,X_\beta]\right)
\\
&=-\frac{{\rm d}f_0}{{\rm d}t}\frac{\partial}{\partial t}+\sum_{\alpha=1}^{r}\left(\left(f_0\frac{{\rm d} b_\alpha}{{\rm d} t}-\frac{{\rm d} f_\alpha}{{\rm d} t}\right)X_\alpha+\sum_{\beta,\gamma=1}^{r}b_{\beta} f_\alpha c_{
\alpha\beta\gamma}X_{\gamma}\right)\\
&=-\frac{{\rm d}f_0}{{\rm d}t}\frac{\partial}{\partial t}+\sum_{\alpha=1}^{r}\left(f_0\frac{{\rm d} b_\alpha}{{\rm d} t}-\frac{{\rm d} f_\alpha}{{\rm d}
t}+{\sum_{\beta,\gamma=1}^{r}{b_{\beta}f_\gamma c_{
\gamma\beta\alpha}}}\right)X_{\alpha}\\&=h \left(\frac{\partial }{\partial t}+\sum_{\alpha=1}^{r}b_{\alpha}X_{\alpha}\right).
\end{aligned}
$$
Thus, $[Y,\bar X]=h\bar X$ is equivalent to
$$\left(-\frac{{\rm d}f_0}{{\rm d}t}-h\right)\frac{\partial}{\partial t}+\sum_{\alpha=1}^{r}\left[\sum_{\beta,\gamma=1}^{r}{{b_{\beta}f_\gamma c_{
\gamma\beta\alpha}}}+f_0\frac{{\rm d} b_\alpha}{{\rm d} t}-\frac{{\rm d} f_\alpha}{{\rm d}
t}-hb_{\alpha}\right]X_{\alpha}=0.
$$
Since $\partial/\partial t, X_1,\ldots,X_r$, are linearly independent over $\mathbb{R}$, we obtain that $[Y,\bar X]=h\bar X$ if and only if (\ref{SysFun}) is fulfilled with $h$ being an arbitrary $t$-dependent function such that $b_0(t)=-h(t)$ for every $t\in\mathbb{R}$.
 Hence, the elements of $\mathcal{S}_X^V$ are the particular solutions of (\ref{SysFun}) for a $t$-dependent function $b_0$. 
\end{proof}

\begin{definition} We call (\ref{SysFun}) the {\it symmetry system} of the Lie system (\ref{LieDecom}) with respect to its Vessiot--Guldberg Lie algebra $V$. We write $\Gamma_X^V$ for the $t$-dependent vector field associated with (\ref{SysFun}), 
that is, 
\end{definition}
\begin{equation*}
 \Gamma_X^V=b_0(t)\frac{\partial}{\partial f_0}+\sum_{\alpha=1}^r\left(f_0\frac{{\rm d}b_{\alpha}}{{\rm d} t}(t)+b_0(t)b_{\alpha}(t)+\sum_{\gamma,\beta=1}^r b_{\beta}(t)f_{\gamma}c_{\gamma \beta \alpha}\right)\frac{\partial}{\partial f_{\alpha}}.
\end{equation*}

Note that (\ref{LieDecom}) may have different Vessiot--Guldberg Lie algebras (see \cite{Dissertations} for details). 
When $V$ is understood from context, we simply call (\ref{SysFun}) the symmetry system of $X$. Let us prove that (\ref{SysFun}) is a Lie system.

\begin{theorem}\label{The:SLS}
The system $\Gamma_{X}^V$ is a Lie system possessing a Vessiot--Guldberg Lie algebra
\begin{equation}\label{deVS}
(A_1\oplus_S A_2)\oplus_S V_L\simeq (\mathbb{R}^{r+1}\oplus_S\mathbb{R}^r)\oplus_S V/Z(V),
\end{equation}
where
$$
A_1=\langle Z_0,\ldots, Z_r\rangle \simeq \mathbb{R}^{r+1},\qquad A_2=\langle W_1,\ldots, W_r\rangle \simeq \mathbb{R}^r,\qquad V_L=\langle Y_1,\ldots,Y_r\rangle \simeq V/Z(V),
$$
with
\begin{equation*}\label{symfields}
Y_\alpha=\sum_{\beta,\gamma=1}^rf_\beta
c_{\beta\alpha \gamma}\frac{\partial}{\partial f_\gamma}, \qquad W_{\alpha}=f_0\frac{\partial}{\partial f_\alpha},\qquad Z_0=\frac{\partial}{\partial f_0},\qquad Z_\alpha=\frac{\partial}{\partial f_\alpha},\qquad \alpha=1,\ldots,r,
\end{equation*}
 we write $A \oplus_S B$ for the semi-direct sum of the ideal $A$ of $A+B$ with $B$, and $Z(V)$ is the center of the Lie algebra $V$. 
\end{theorem}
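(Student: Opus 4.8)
The plan is to rewrite $\Gamma_X^V$ as a $t$-dependent linear combination of the finitely many vector fields $Z_0,Z_1,\dots,Z_r$, $W_1,\dots,W_r$, $Y_1,\dots,Y_r$, to verify that these span a finite-dimensional real Lie algebra with the stated semi-direct structure, and then to invoke the Lie--Scheffers Theorem. Grouping the terms in the displayed expression for $\Gamma_X^V$ according to the generators gives
\begin{equation*}
\Gamma_X^V=b_0(t)\,Z_0+\sum_{\alpha=1}^r\frac{{\rm d}b_\alpha}{{\rm d}t}(t)\,W_\alpha+\sum_{\alpha=1}^r b_0(t)b_\alpha(t)\,Z_\alpha+\sum_{\alpha=1}^r b_\alpha(t)\,Y_\alpha ,
\end{equation*}
so it is enough to study the span $\mathfrak{W}:=\langle Z_0,\dots,Z_r,W_1,\dots,W_r,Y_1,\dots,Y_r\rangle$.

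Next I would compute all Lie brackets among the generators. Since the $Z$'s are constant-coefficient fields they pairwise commute, so $A_1=\langle Z_0,\dots,Z_r\rangle\cong\mathbb{R}^{r+1}$ is abelian; a direct check gives $[W_\alpha,W_\beta]=0$, so $A_2=\langle W_1,\dots,W_r\rangle\cong\mathbb{R}^r$ is abelian as well. The mixed brackets are $[Z_0,W_\alpha]=Z_\alpha$ and $[Z_\beta,W_\alpha]=0$, so $[A_1,A_2]\subseteq A_1$, i.e. $A_1$ is an ideal of $A_1+A_2$ and $A_1+A_2=A_1\oplus_S A_2\cong\mathbb{R}^{r+1}\oplus_S\mathbb{R}^r$. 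For the $Y_\alpha$ the key remark is that $Y_\alpha$ is exactly the linear vector field on the variables $f_1,\dots,f_r$ attached to the matrix $-\mathrm{ad}_{X_\alpha}$ of $V$; combining the standard identity $[V_M,V_N]=-V_{[M,N]}$ for linear vector fields with the Jacobi identity satisfied by the $c_{\alpha\beta\gamma}$, one gets $[Y_\alpha,Y_\beta]=\sum_{\gamma=1}^rc_{\alpha\beta\gamma}Y_\gamma$. Hence the assignment $X_\alpha\mapsto Y_\alpha$ extends to a surjective Lie algebra homomorphism $V\to V_L=\langle Y_1,\dots,Y_r\rangle$; an element $\sum_\alpha a_\alpha X_\alpha$ lies in its kernel iff $\sum_\alpha a_\alpha\,\mathrm{ad}_{X_\alpha}=0$, i.e. iff it is central, so $V_L\cong V/Z(V)$.

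The last group of brackets glues the three blocks together: one computes $[Y_\alpha,Z_0]=0$, $[Y_\alpha,Z_\beta]=-\sum_\gamma c_{\beta\alpha\gamma}Z_\gamma\in A_1$ and $[Y_\alpha,W_\beta]=-\sum_\gamma c_{\beta\alpha\gamma}W_\gamma\in A_2$. Thus $[V_L,A_1]\subseteq A_1$ and $[V_L,A_2]\subseteq A_2$, so $A_1+A_2$ is an ideal of $\mathfrak{W}=A_1+A_2+V_L$, and $\mathfrak{W}=(A_1\oplus_S A_2)\oplus_S V_L\cong(\mathbb{R}^{r+1}\oplus_S\mathbb{R}^r)\oplus_S V/Z(V)$. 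Finally, distinguishing coefficients that are constant, linear in $f_0$, and linear in $f_1,\dots,f_r$ shows that $A_1$, $A_2$ and $V_L$ are independent as subspaces of vector fields, so $\dim\mathfrak{W}=3r+1-\dim Z(V)<\infty$; the decomposition of $\Gamma_X^V$ above together with the Lie--Scheffers Theorem then yields that $\Gamma_X^V$ is a Lie system with Vessiot--Guldberg Lie algebra $\mathfrak{W}$.

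The only step requiring genuine computation is the bracket relation $[Y_\alpha,Y_\beta]=\sum_\gamma c_{\alpha\beta\gamma}Y_\gamma$ (which is where the Jacobi identity enters) and the attendant identification of the kernel of $X_\alpha\mapsto Y_\alpha$ with $Z(V)$; the remaining brackets are routine bookkeeping, and the one subtlety to watch is keeping the order of the semi-direct sums consistent with the convention in the statement, namely that the first summand is always the ideal.
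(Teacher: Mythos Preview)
Your proof is correct and follows essentially the same route as the paper: decompose $\Gamma_X^V$ in the generators $Z_\alpha,W_\alpha,Y_\alpha$, compute all mutual brackets to obtain the nested semi-direct structure, identify $V_L$ with $V/Z(V)$ via the surjection $X_\alpha\mapsto Y_\alpha$, and conclude by Lie--Scheffers. The only cosmetic difference is that you package the computation of $[Y_\alpha,Y_\beta]=\sum_\gamma c_{\alpha\beta\gamma}Y_\gamma$ via the adjoint representation and the identity $[V_M,V_N]=-V_{[M,N]}$ for linear vector fields, whereas the paper expands the bracket in coordinates and invokes the Jacobi identity on the structure constants directly; both arguments are the same in content.
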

\begin{proof}
By defining $t$-dependent functions
$$
c_0\equiv b_0,\quad \bar b_0\equiv 0, \quad c_\alpha\equiv b_0b_\alpha, \quad \bar b_\alpha\equiv \frac{{\rm d}b_\alpha}{{\rm d}t},\qquad \alpha=1,\ldots,r,
$$
we can write
$$
\Gamma^V_X(t,f)=\sum_{\alpha=0}^r c_{\alpha}(t)Z_{\alpha}(f)+\sum_{\alpha=1}^r\left[\bar{b}_{\alpha}(t)W_{\alpha}(f)+b_{\alpha}(t)Y_{\alpha}(f)\right],\qquad f\in\mathbb{R}^{n+1}.
$$
Hence, $\Gamma^V_X$ is a $t$-dependent vector field taking values in the linear space $V^S\equiv A_1+A_2+V_L$. Let us show that $V^S$ is also a Lie algebra of vector fields. To do so, let us first prove that $V_L\simeq V/Z(V)$. Consider $Y_{\alpha}, Y_{\beta}\in V_L$. Recalling that $[X_{\alpha},X_{\beta}]=\sum_{\gamma=1}^rc_{\alpha\beta\gamma}X_{\gamma}$, we obtain
\begin{multline*}
\left[Y_{\alpha},Y_{\beta}\right]=\sum_{i,j,m,n=1}^{r}\left[c_{i \alpha j}f_{i}\frac{\partial}{
\partial f_{j}}, c_{m
\beta n}f_{m}\frac{\partial}{\partial f_{n}}\right]=
\sum_{i,j,m,n=1}^{r}c_{i \alpha j}c_{m
\beta n}\left(f_{i}\delta_{j}^{m}\frac{\partial}{
\partial x_{n}}-f_{m}\delta_{n}^{i}\frac{\partial}{\partial
f_{j}}\right)\\
=\sum_{i,m,n=1}^{r}c_{i\alpha m}c_{m \beta n}f_{i}
\frac{\partial}{\partial f_{n}}-\sum_{n,j,m=1}^{r}c_{n\alpha j}c_{m \beta n}f_{m
}\frac{\partial}{\partial f_j}=\sum_{i,m,n=1}^{r}(c_{i\alpha m}c_{m \beta n}+c_{m\alpha n}c_{\beta im})f_{i
}\frac{\partial}{\partial f_n}.
\end{multline*}
Using the Jacobi identity for the structure constants $c_{\alpha\beta\gamma}$, we see that
\begin{equation*}\label{sc}
\sum_{m,n=1}^{r}\left(c_{i\alpha m}c_{m \beta n}+c_{\alpha \beta m}c_{m i n}+c_{\beta i m}c_{m \alpha n}\right)=0,\qquad \forall i,\alpha,\beta=1,\ldots,r.
\end{equation*}
From this, 
\begin{equation*}
\left[Y_{\alpha},Y_{\beta}\right]=\sum_{i,m,n=1}^{r}(c_{i\alpha m}c_{m \beta n}+c_{m\alpha n}c_{\beta im})f_{i
}\frac{\partial}{\partial f_n}=-\sum_{i,m,n=1}^{r}c_{\alpha \beta m}c_{m i n}f_{i
}\frac{\partial}{\partial f_n}=\sum_{m=1}^rc_{\alpha \beta m}Y_{m}.
\end{equation*}
So, $Y_1,\ldots,Y_{r}$ span a Lie algebra. We can define a Lie algebra morphism $\phi:V\rightarrow V_L$ of the form $\phi(X_\alpha)=Y_\alpha$ for 
$\alpha=1,\ldots,r$. 
The vector fields $Y_1,\ldots,Y_r$ do not need to be linearly independent. Let us show this. We can assume with no loss of generality that $X_1,\ldots,X_s$, with $s\leq r$, form a basis for $\ker \phi$. Since $\phi(X_\alpha)=0$ for $\alpha=1,
\ldots,s$, we have $f_\beta c_{\beta\alpha\gamma}=0$ for $\alpha=1,\ldots,s$ and $\beta,\gamma=1,\ldots,r$. Thus, we see that $[X_\alpha,X_\beta]=0$ for $\alpha=1,\ldots,s$ and $\beta=1,\ldots,r$. This means that $X_\alpha\in Z(V)$. Conversely,  we get by similar arguments that if $X\in Z(V)$, then $X\in \ker \phi.$ Hence, $X\in \ker\phi$ if and only if $X\in Z(V)$. In consequence, $\ker \phi=Z(V)$ and $Y_1,\ldots,Y_r$ span a Lie algebra isomorphic to $V/Z(V)$. 

It is obvious that $A_1$ is an ideal of $A_1+A_2$. Moreover, as $[A_1,V_L]\subset A_1$ and $[A_2,V_L]\subset A_2$, then $A_1\oplus_{S} A_2$ is an ideal of $V^S$. Consequently, $V^S$ is a Lie algebra of the form (\ref{deVS}).
\end{proof}

\begin{definition} We say that the {\it Lie systems $X_1$ and $X_2$ are isomorphic} when they take values in two isomorphic Vessiot--Guldberg Lie algebras $V_1, V_2$ and there exists a Lie algebra isomorphism $\phi:V_1\rightarrow V_2$ such that $(X_2)_t=\phi((X_1)_t)$ for each $t\in\mathbb{R}$.

\end{definition}

\begin{proposition}\label{SymTheo} Given two isomorphic Lie systems $X_1$ and $X_2$ related to Vessiot--Guldberg Lie algebras $V_1, V_2$, their symmetry systems relative to such Lie algebras are, up to a change of basis in $V_1$ and/or $V_2$, the same.
\end{proposition}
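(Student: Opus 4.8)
The plan is to observe that the symmetry system \eqref{SysFun}, equivalently the $t$-dependent vector field $\Gamma_X^V$, is built out of the chosen Vessiot--Guldberg Lie algebra $V$ and the chosen basis $X_1,\ldots,X_r$ only through two ingredients: the structure constants $c_{\alpha\beta\gamma}$ of that basis and the $t$-dependent functions $b_1,\ldots,b_r$ occurring in the decomposition \eqref{LieDecom} of $X$ relative to it. Inspecting \eqref{SysFun}, no other feature of $V$, of $X$, or of the underlying manifold enters the equations for $f_0,\ldots,f_r$ (the function $b_0$ being merely a free parameter, common to both systems). Hence it suffices to produce bases of $V_1$ and $V_2$ for which these two ingredients agree.

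To this end I would fix any basis $X_1,\ldots,X_r$ of $V_1$, write $(X_1)_t=\sum_{\alpha=1}^r b_\alpha(t)X_\alpha$ as in \eqref{LieDecom}, and let $c_{\alpha\beta\gamma}$ be the associated structure constants. Let $\phi:V_1\to V_2$ be the Lie algebra isomorphism provided by the hypothesis that $X_1$ and $X_2$ are isomorphic, and set $\bar X_\alpha\equiv\phi(X_\alpha)$. Since $\phi$ is a Lie algebra isomorphism, $\bar X_1,\ldots,\bar X_r$ is a basis of $V_2$ and $[\bar X_\alpha,\bar X_\beta]=\phi([X_\alpha,X_\beta])=\sum_{\gamma=1}^r c_{\alpha\beta\gamma}\bar X_\gamma$, so the structure constants of $V_2$ in this basis are again $c_{\alpha\beta\gamma}$. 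Applying $\phi$ to $(X_1)_t=\sum_\alpha b_\alpha(t)X_\alpha$ and using $(X_2)_t=\phi((X_1)_t)$ gives $(X_2)_t=\sum_\alpha b_\alpha(t)\bar X_\alpha$, so the decomposition of $X_2$ in the basis $\{\bar X_\alpha\}$ has exactly the same coefficients. By the previous paragraph, the symmetry system of $X_1$ in the basis $\{X_\alpha\}$ and the symmetry system of $X_2$ in the basis $\{\bar X_\alpha\}$ are literally the same system \eqref{SysFun} in the unknowns $f_0,\ldots,f_r$. Passing from an arbitrarily given basis of $V_2$ to $\{\bar X_\alpha\}$ is a change of basis in $V_2$, which establishes the claim (and, symmetrically, one may instead keep the basis of $V_2$ fixed and change that of $V_1$).

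I do not expect a genuine obstacle here: the whole content is that \eqref{SysFun} is an intrinsic construction attached to the pair (structure constants, decomposition coefficients), and both members of this pair are preserved by $\phi$. The only step requiring a little care is formulating the ``up to a change of basis'' clause precisely, namely checking that a linear change $X_\alpha\mapsto\sum_\beta M_{\alpha\beta}X_\beta$ in $V$ transforms \eqref{SysFun} through the corresponding contragredient linear change of the variables $f_1,\ldots,f_r$ (with $f_0$ and $b_0$ left untouched), together with the induced transformations of $c_{\alpha\beta\gamma}$ and of the $b_\alpha$; this is routine index bookkeeping and can be stated briefly.
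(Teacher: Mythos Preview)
Your argument is correct, and in fact the paper states this proposition without proof; your write-up supplies exactly the expected justification. The key observation---that the symmetry system \eqref{SysFun} depends on the Lie system only through the structure constants $c_{\alpha\beta\gamma}$ and the decomposition coefficients $b_\alpha(t)$, both of which are transported intact by the isomorphism $\phi$ once one uses the basis $\bar X_\alpha=\phi(X_\alpha)$ on $V_2$---is precisely the intended content, and the paper uses the proposition in just this way in Section~5 (e.g., transporting the symmetry system of a generic $\mathfrak{sl}(2,\mathbb{R})$-Lie system to the Cayley--Klein, quaternionic, and Kummer--Schwarz cases via the maps $\phi(X_\alpha^{\rm CK})=X_\alpha$, etc.).
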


\section{Lie algebras of Lie symmetries for Lie systems}
In this section we study different Lie subalgebras of $\mathcal{S}^V_X$. Their interest resides in the fact that, when finite-dimensional, they can be integrated to form Lie group actions of symmetries for $X$ that can be employed to simplify the Lie system they are referred to.

\begin{lemma}
The space of functions $C^{\infty}(\mathbb{R})$ can be endowed with a Lie bracket given by
\begin{equation}\label{LSbracket}
 \{f,\bar f\}_{\mathbb{R}}=f\frac{{\rm d} \bar f}{{\rm d}t}-\bar f\,\frac{{\rm d} f}{{\rm d} t},\qquad \forall f,\bar f\in C^\infty(\mathbb{R}).
\end{equation}
\end{lemma}
\begin{proof}
In order to prove that \eqref{LSbracket} is a Lie bracket, we must show that (\ref{LSbracket}) is  bilinear, antisymmetric and satisfies the Jacobi identity.
From its definition (\ref{LSbracket}) is clearly bilinear and antisymmetric. To see that (\ref{LSbracket}) holds the Jacobi identity, we consider the map $\phi:f\in C^{\infty}(\mathbb{R}) \mapsto f\partial/\partial t\in \Gamma({\rm T}\mathbb{R})$. Observe that $\phi$ is a linear isomorphism. 

Moreover, 
$$
\phi(\{f,\bar f\}_\mathbb{R})=\{f,\bar f\}_\mathbb{R}\frac{\partial}{\partial t}=\left(f\frac{{\rm d} \bar f}{{\rm d}t}-\bar f\,\frac{{\rm d}f}{{\rm d}t}\right)\frac{\partial}{\partial t}=\left[f\frac{\partial}{\partial t},\bar f\frac{\partial}{\partial t}\right]=[\phi(f),\phi(\bar f)],\,\, \forall f,\bar f\!\in\! C^\infty(\mathbb{R}).
$$
By using the Jacobi identity for vector fields on $\mathbb{R}$ with respect to the Lie bracket $[\cdot,\cdot]$, we obtain
\begin{multline*}
 \phi (\{\{f,\bar f\}_\mathbb{R},\bar{\bar f}\}_\mathbb{R}+\{\{\bar f,\bar{\bar{f}}\}_\mathbb{R},f\}_\mathbb{R}+\{\{\bar{\bar f},f\}_\mathbb{R},\bar{f}\}_\mathbb{R})=\\
[[\phi(f),\phi(\bar{f})],\phi(\bar{\bar f})]+[[\phi(\bar f),\phi(\bar{\bar f})],\phi(f)]+[[\phi(\bar{\bar f}),\phi(f)],\phi(\bar{f})]=0,\,\, \forall f,\bar f,\bar{\bar f}\in C^\infty(\mathbb{R}).
\end{multline*}
Since $\phi$ is a linear isomorphism, then
$$
\{\{f,\bar f\}_\mathbb{R},\bar{\bar f}\}_\mathbb{R}+\{\{\bar f,\bar{\bar f}\}_\mathbb{R},f\}_\mathbb{R}+\{\{\bar{\bar f},f\}_\mathbb{R},\bar f\}_\mathbb{R}=0
$$
and (\ref{LSbracket}) satisfies the Jacobi identity giving rise to a Lie bracket on $C^\infty(\mathbb{R})$. Moreover, $\phi$ becomes a Lie algebra isomorphism.
\end{proof}
\begin{definition}
Let $X$ be a Lie system on $N$ with a Vessiot--Guldberg Lie algebra $V$ and let $\mathfrak{W}$ be a nonempty set of
$t$-dependent functions that form a Lie algebra with respect to the Lie bracket defined in \eqref{LSbracket}. We call $\mathcal{S}_{X,\mathfrak{W}}^{V}$ the space
\begin{equation*}
 \mathcal{S}_{X,\mathfrak{W}}^{V}=\left\{Y\in \mathcal{S}_X^{V} \mid  Y=f_0\frac{\partial}{\partial t}+\sum_{\alpha=1}^{r}f_\alpha X_{\alpha},\,\, f_0\in \mathfrak{W}\right\},
\end{equation*}
where $X_1,\ldots,X_r$ is a basis for $V$.
\end{definition}

\begin{proposition}\label{finite}
The space of symmetries $\mathcal{S}_{X,\mathfrak{W}}^{V}$ is a Lie algebra of symmetries of $X.$
\begin{proof}
Since $\mathfrak{W}$ and $\mathcal{S}_X^V$ are linear spaces, the linear combinations of elements of $\mathcal{S}^V_{X,\mathfrak{W}}$ belong to $\mathcal{S}^V_{X,\mathfrak{W}}$. So, this space becomes a vector space. Moreover, given two elements, $Y,Y^*\in \mathcal{S}_{X,\mathfrak{W}}^{V}$, their Lie bracket reads
\begin{multline*}
 [Y,Y^*]=\left[f_0(t)\frac{\partial}{ \partial t}+\sum_{\alpha=1}^{r}f_\alpha(t) X_{\alpha},f_0^*(t)\frac{\partial}{\partial t}+\sum_{\beta=1}^{r}{f}^*_\beta(t) X_{\beta}\right]=\\
 \{f_0,{f^*_0}\}_{\mathbb{R}}\frac{\partial}{ \partial t}+\sum_{\beta=1}^{r}\left[\left(f_0\frac{{\rm d}{f^*_{\beta}}}{{\rm d}t}-{f^*_0}\frac{df_{\beta}}{d t}\right)X_{\beta}+\sum_{\alpha,\gamma=1}^{r}f_{\alpha}{f^*_{\beta}}c_{\alpha\beta\gamma}X_{\gamma}\right]=\\
\{f_0,{f^*_0}\}_{\mathbb{R}}\frac{\partial}{\partial t}+\sum_{\gamma=1}^{r}\left[\left(f_0\frac{{\rm d}{f^*_{\gamma}}}{{\rm d} t}-{f^*_0}\frac{{\rm d} f_{\gamma}}{{\rm d} t}\right)+\sum_{\alpha,\beta=1}^{r}f_{\alpha}{f^*_{\beta}}c_{\alpha\beta\gamma}\right]X_{\gamma}.
\end{multline*}
Since $\mathcal{S}_X^V$ is a Lie algebra and $Y,Y^*\in\mathcal{S}^V_{X}$, then $[Y,Y^*]\in \mathcal{S}_X^V$. As  additionally $\{f_0,f^*_0\}_\mathbb{R}\in \mathfrak{W}$, then $[Y,Y^*]\in \mathcal{S}^V_{X,\mathfrak{W}}$. Hence, the Lie bracket of elements of $\mathcal{S}^V_{X,\mathfrak{W}}$ belongs to $\mathcal{S}^V_{X,\mathfrak{W}}$, which becomes a Lie algebra.
\end{proof}
\end{proposition}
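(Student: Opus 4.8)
The plan is to check the two defining properties of a Lie subalgebra of the ambient Lie algebra $\mathcal{S}_X^V$ of vector fields: that $\mathcal{S}_{X,\mathfrak{W}}^{V}$ is a linear subspace, and that it is closed under the Lie bracket. Since the bracket on $\mathcal{S}_{X,\mathfrak{W}}^{V}$ will then be the restriction of the ambient commutator of vector fields, antisymmetry and the Jacobi identity come for free, so these two checks suffice.

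For linearity, I would use that $\mathcal{S}_X^V$ is already known to be a real Lie algebra of vector fields (this follows from phrasing the symmetry condition as $[Y,\bar X]=h\bar X$), and that $\mathfrak{W}$ is by hypothesis a linear subspace of $C^\infty(\mathbb{R})$. An element $Y\in\mathcal{S}_X^V$ of the form \eqref{LieSym} is uniquely determined by the tuple $(f_0,\ldots,f_r)$, so the assignment $Y\mapsto f_0$ is an $\mathbb{R}$-linear map on $\mathcal{S}_X^V$; hence $\mathcal{S}_{X,\mathfrak{W}}^{V}$, being the preimage of the subspace $\mathfrak{W}$ under this linear map, is itself a linear subspace of $\mathcal{S}_X^V$.

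For closure under the bracket, take $Y,Y^*\in\mathcal{S}_{X,\mathfrak{W}}^V$ with $\partial/\partial t$-components $f_0,f_0^*\in\mathfrak{W}$, and expand $[Y,Y^*]$. The point to be careful about is the $\partial/\partial t$-component: since the $X_\alpha$ are vector fields on $N$ and all coefficients depend on $t$ only, each $X_\alpha$ annihilates every function of $t$ alone, so the cross terms $[f_0\,\partial/\partial t,\,f_\beta^* X_\beta]$ contribute nothing to the $\partial/\partial t$-component, and neither do the terms $[f_\alpha X_\alpha,\,f_\beta^* X_\beta]$, which lie in $V$. Thus the only contribution to the $\partial/\partial t$-component of $[Y,Y^*]$ is $[f_0\,\partial/\partial t,\,f_0^*\,\partial/\partial t]=\{f_0,f_0^*\}_{\mathbb{R}}\,\partial/\partial t$, with $\{\cdot,\cdot\}_{\mathbb{R}}$ the Lie bracket on $C^\infty(\mathbb{R})$ introduced above. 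Because $\mathfrak{W}$ is a Lie algebra under $\{\cdot,\cdot\}_{\mathbb{R}}$, we get $\{f_0,f_0^*\}_{\mathbb{R}}\in\mathfrak{W}$; and because $\mathcal{S}_X^V$ is a Lie algebra, $[Y,Y^*]\in\mathcal{S}_X^V$. Combining the two, $[Y,Y^*]$ is an element of $\mathcal{S}_X^V$ whose $\partial/\partial t$-component lies in $\mathfrak{W}$, i.e. $[Y,Y^*]\in\mathcal{S}_{X,\mathfrak{W}}^V$, which establishes the claim.

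I do not anticipate a genuine obstacle: the argument is essentially bookkeeping. The one place that repays attention is the explicit evaluation of $[Y,Y^*]$ for vector fields of the form \eqref{LieSym} — specifically verifying that the $V$-valued part of $Y$ cannot feed back into the $\partial/\partial t$-direction — and recognising the surviving coefficient as $\{f_0,f_0^*\}_{\mathbb{R}}$, so that the hypothesis on $\mathfrak{W}$ can be invoked.
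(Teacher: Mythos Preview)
Your proposal is correct and follows essentially the same approach as the paper: verify that $\mathcal{S}_{X,\mathfrak{W}}^{V}$ is a linear subspace of $\mathcal{S}_X^V$, then compute $[Y,Y^*]$ to see that its $\partial/\partial t$-component is $\{f_0,f_0^*\}_{\mathbb{R}}\in\mathfrak{W}$ while the whole bracket lies in $\mathcal{S}_X^V$. Your phrasing of the linearity step via the preimage of $\mathfrak{W}$ under $Y\mapsto f_0$ and your explicit justification that the $V$-valued part cannot feed back into $\partial/\partial t$ are slightly more careful than the paper's direct computation, but the argument is the same.
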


\begin{corollary}
Given a Lie system $X$ on $N$ related to a Vessiot--Guldberg Lie algebra $V,$ the elements of $\mathcal{S}^{V}_{X,\mathfrak{W}}$
with
\begin{enumerate}
 \item $\mathfrak{W}=\{f_0\in C^\infty(\mathbb{R})\,\,|\,\,{\rm d}f_0/{\rm d}t=0\}$,
 \item $\mathfrak{W}=\{f_0\in C^\infty(\mathbb{R})\,\,|\,\,f_0=0\}$,
\end{enumerate}
are finite-dimensional Lie algebras of vector fields. In the second case, $S^V_{X,\mathfrak{W}}$ is isomorphic to $V.$
\end{corollary}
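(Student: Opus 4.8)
The plan is to feed the two choices of $\mathfrak{W}$ into Proposition~\ref{finite} for the Lie-algebra structure, and then to integrate the symmetry system explicitly by means of Lemma~\ref{SLS}. First I would note that in both cases $\mathfrak{W}$ is an (abelian) Lie subalgebra of $\big(C^\infty(\mathbb{R}),\{\cdot,\cdot\}_{\mathbb{R}}\big)$: for constants $c,\bar c$ one has $\{c,\bar c\}_{\mathbb{R}}=c\,\bar c'-\bar c\,c'=0$, and trivially $\{0,0\}_{\mathbb{R}}=0$. Hence Proposition~\ref{finite} already gives that $\mathcal{S}^V_{X,\mathfrak{W}}$ is a Lie algebra of symmetries of $X$ in both cases, and it only remains to control its dimension (and, in the second case, its isomorphism type).

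For the dimension, take $Y=f_0\partial/\partial t+\sum_\alpha f_\alpha X_\alpha\in\mathcal{S}^V_{X,\mathfrak{W}}$. By Lemma~\ref{SLS} the functions $f_0,\dots,f_r$ solve \eqref{SysFun} for some $b_0$, and since $f_0$ is constant (case~1) or identically zero (case~2), the first equation of \eqref{SysFun} forces $b_0\equiv 0$. The remaining $r$ equations then reduce to
\[
\frac{{\rm d}f_\alpha}{{\rm d}t}=f_0\frac{{\rm d}b_\alpha}{{\rm d}t}(t)+\sum_{\beta,\gamma=1}^r b_\beta(t) f_\gamma c_{\gamma\beta\alpha},\qquad \alpha=1,\dots,r,
\]
which, for a fixed value of the constant $f_0$, is a linear system of first-order ODEs for $(f_1,\dots,f_r)$ with smooth coefficients (homogeneous in case~2, affine in case~1). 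By linearity its solutions extend over the whole domain of the $b_\alpha$ and are determined uniquely by $f_0$ together with the initial data $(f_1(t_0),\dots,f_r(t_0))$ at a fixed $t_0$; conversely, every such solution yields via Lemma~\ref{SLS} (with $b_0\equiv0$) an element of $\mathcal{S}^V_{X,\mathfrak{W}}$. Hence the map $Y\mapsto\big(f_0,f_1(t_0),\dots,f_r(t_0)\big)$ is a linear isomorphism of $\mathcal{S}^V_{X,\mathfrak{W}}$ onto $\mathbb{R}^{r+1}$ in case~1 and onto $\{0\}\times\mathbb{R}^r\cong\mathbb{R}^r$ in case~2, so $\mathcal{S}^V_{X,\mathfrak{W}}$ is finite-dimensional, of dimension $r+1$ and $r=\dim V$ respectively.

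For the isomorphism claim in case~2 ($f_0\equiv 0$), I would show that the evaluation map $\mathrm{ev}_{t_0}\colon\mathcal{S}^V_{X,\mathfrak{W}}\to V$, $Y=\sum_\alpha f_\alpha(t)X_\alpha\mapsto\sum_\alpha f_\alpha(t_0)X_\alpha$, is a Lie-algebra isomorphism. It is a linear bijection by the previous paragraph (injectivity because the vanishing of all $f_\alpha(t_0)$ forces $f_\alpha\equiv0$ by uniqueness; surjectivity by solving the homogeneous linear system with prescribed initial data and invoking Lemma~\ref{SLS}). That it respects brackets follows by specializing the bracket formula computed in the proof of Proposition~\ref{finite} to $f_0=f_0^*=0$, which gives $[Y,Y^*]=\sum_\gamma\big(\sum_{\alpha,\beta}f_\alpha f^*_\beta c_{\alpha\beta\gamma}\big)X_\gamma$; evaluating at $t_0$ reproduces exactly $[\sum_\alpha f_\alpha(t_0)X_\alpha,\sum_\beta f^*_\beta(t_0)X_\beta]$ in $V$, using $[X_\alpha,X_\beta]=\sum_\gamma c_{\alpha\beta\gamma}X_\gamma$. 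Therefore $\mathcal{S}^V_{X,\mathfrak{W}}\simeq V$.

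I expect the only genuinely delicate point to be the global existence/uniqueness for the reduced linear system, which is handled by its linearity together with the standing smoothness and globality assumptions of the paper (otherwise one restricts to a common domain of definition of the coefficients); the compatibility of $\mathrm{ev}_{t_0}$ with Lie brackets, which might look like the crux, is in fact immediate from the computation already performed for Proposition~\ref{finite}. The rest is bookkeeping, and throughout one uses crucially the $\mathbb{R}$-linear independence of $X_1,\dots,X_r$.
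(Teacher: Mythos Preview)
Your proposal is correct and follows essentially the same route as the paper: both invoke Proposition~\ref{finite} to get the Lie-algebra structure, reduce via Lemma~\ref{SLS} (with $b_0\equiv 0$) to a linear ODE system for $(f_1,\dots,f_r)$, and then use the evaluation map at a fixed time together with existence/uniqueness for that linear system to establish bijectivity and the Lie-algebra isomorphism with $V$ in case~2. Your treatment is in fact slightly more complete than the paper's, since you spell out the finite-dimensionality argument for case~1 (dimension $r+1$) via the affine linear system, whereas the paper only records that $\mathcal{S}^V_{X,\mathfrak{W}}$ is a Lie algebra in that case and reserves the detailed ODE/evaluation argument for case~2.
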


\begin{proof} In both cases, $\mathfrak{W}$ is nonempty. In the first case, the functions with ${\rm d}f_0/{\rm d}t=0$ are constant. These functions form an Abelian Lie algebra with
respect to the Lie bracket $\{\cdot,\cdot\}_\mathbb{R}$. In view of Proposition \ref{finite}, the space $\mathcal{S}^V_{X,\mathfrak{W}}$ is a Lie algebra.

In the second case, the function zero
is also a zero-dimensional Lie algebra relative to $\{\cdot,\cdot\}_\mathbb{R}$. Since $f_0=0$ and using Proposition \ref{finite}, we obtain that $\mathcal{S}^V_{X,\mathfrak{W}}$ can be understood as a Lie algebra of $t$-dependent vector fields taking values in $V$.
To prove that $\mathcal{S}^V_{X,\mathfrak{W}}\simeq V$, let us consider the morphism which maps each $t$-dependent vector field with its value at $t=0$, namely
$$\begin{array}{rccc}
\phi:&\mathcal{S}^V_{X,\mathfrak{W}}&\longrightarrow &V\\
    &Z&\mapsto &Z_0.
\end{array}
$$
Let $X_1,\ldots,X_r$ be a basis for $V$. From Lemma (\ref{SLS}), we have that $(f_1(t),\ldots,f_r(t))$ is a particular solution of the system
\begin{equation*}
\frac{ {\rm d}f_\alpha}{{\rm d}t}=\sum_{\delta,\beta=1}^rb_\beta(t) f_\delta
c_{\delta \beta \alpha},\qquad \alpha=1,\ldots,r.
\end{equation*}
For each initial condition $f_\alpha(0)=c_\alpha \in \mathbb{R}$, with $\alpha=1,\ldots,r$, i.e., by fixing $Z_0$, there exists a unique solution of the above system. Hence, there exists a unique $t$-dependent vector field $Z$ of $\mathcal{S}^V_{X,\mathfrak{W}}$
with $Z_0=\sum_{\alpha=0}^r c_\alpha X_\alpha$. Thus, $\phi$ is a bijection. Using that for two vector fields $Z_1,Z_2\in \mathcal{S}^V_{X,\mathfrak{W}}$ we have $[Z_1,Z_2]\in \mathcal{S}^V_{X,\mathfrak{W}}$ and $[Z_1,Z_2]_t=[(Z_1)_t,(Z_2)_t]$, we see that
$\phi$ is a Lie algebra morphism and $\mathcal{S}^V_{X,\mathfrak{W}}\simeq V$.
\end{proof}

\section{Applications to systems of ODEs and HODEs}

Let us work out the symmetry systems and related Lie symmetries for some Lie systems of interest. In particular, we will illustrate that Proposition
 \ref{SymTheo} enables us to determine simultaneously Lie symmetries for different Lie systems with isomorphic Vessiot--Guldberg Lie algebras.
\subsection{Lie symmetries for sl(2,$\mathbb{R}$)-Lie systems}\label{LSSL2}

Let us obtain the symmetry systems and related Lie symmetries for $\mathfrak{sl}(2,\mathbb{R})$-Lie systems. This shall be used in following subsections to obtain simultaneously Lie symmetries of isomorphic $\mathfrak{sl}(2,\mathbb{R})$-Lie systems appearing in the physics and/or mathematical literature.

Let us choose a basis of vector fields $\{X_1,X_2,X_3 \}$ of $V\simeq \mathfrak{sl}(2,\mathbb{R})$ with commutation relations
\begin{equation}\label{ComSl2}
[X_1,X_2]=X_1,\quad [X_1,X_3]=2X_2,\quad [X_2,X_3]=X_3.
\end{equation}
Every Lie system with Vessiot--Guldberg Lie algebra $V$ can be brought into the form
\begin{equation}\label{sl2}
X=b_1(t)X_1+b_2(t)X_2+b_3(t)X_3
\end{equation}
for certain $t$-dependent functions $b_1$, $b_2$ and $b_3$.
The Lie symmetries of $\mathcal{S}_X^V$ take the form
$$Y=f_0(t)\frac{\partial}{\partial t}+f_1(t)X_1+f_2(t)X_2+f_3(t)X_3,$$
where $f_0,f_1,f_2,f_3$ are some $t$-dependent functions to be determined. In view of (\ref{SysFun}) and the commutation relations (\ref{ComSl2}), the symmetry system for $X$ relative to $V$ reads
\begin{equation}\label{slsys}
\left\{\begin{aligned}
\frac{{\rm d}f_0}{{\rm d}t}&=b_0(t),\\
 \frac{{\rm d}f_1}{{\rm d}t}&=f_0\frac{{\rm d} b_1}{{\rm d} t}(t)+f_1b_2(t)-f_2b_1(t)+b_0(t)b_1(t),\\
 \frac{{\rm d}f_2}{{\rm d}t}&=f_0\frac{{\rm d} b_2}{{\rm d} t}(t)+2f_1b_3(t)-2f_3b_1(t)+b_0(t)b_2(t),\\
 \frac{{\rm d}f_3}{{\rm d}t}&=f_0\frac{{\rm d} b_3}{{\rm d} t}(t)+f_2b_3(t)-f_3b_2(t)+b_0(t)b_3(t).\\
\end{aligned}\right.
\end{equation}

As stated in Theorem \ref{The:SLS}, this is a Lie system. Indeed, system \eqref{slsys} is related to the $t$-dependent vector field
\begin{multline*}
\Gamma^{\mathfrak{sl}(2,\mathbb{R})}_X=\frac{{\rm d}b_1(t)}{{\rm d}t}W_1+\frac{{\rm d}b_2(t)}{{\rm d}t}W_2+\frac{{\rm d}b_3(t)}{{\rm d}t}W_3+b_0(t)Z_0+b_0(t)b_1(t)Z_1+b_0(t)b_2(t)Z_2\\\qquad\qquad\qquad+b_0(t)b_3(t)Z_3+b_1(t)Y_1+b_2(t)Y_2+b_3(t)Y_3,
\end{multline*}
where
\begin{equation}\label{com1}
 Z_\alpha=\frac{\partial}{\partial f_\alpha},\qquad \alpha=0,1,2,3,\qquad W_\beta=f_0\frac{\partial}{\partial f_\beta},\qquad \beta=1,2,3
\end{equation}
and
\begin{equation}\label{com3}
 Y_1=-f_2\frac{\partial}{\partial f_1}-2f_3\frac{\partial}{\partial f_2},\quad Y_2=f_1\frac{\partial}{\partial f_1}-f_3\frac{\partial}{\partial f_3},\quad Y_3=2f_1\frac{\partial}{\partial f_2}+f_2\frac{\partial}{\partial f_3}.
\end{equation}
These vector fields hold
\begin{equation}\label{comm0}
 [Y_1,Y_2]=Y_1,\quad [Y_1,Y_3]=2Y_2,\quad [Y_2,Y_3]=Y_3.
\end{equation}
Since $Z(V)=\{0\}$, then $V_L=\langle Y_1,Y_2,Y_3\rangle$ is a Lie algebra isomorphic to $V/Z(V)\simeq\mathfrak{sl}(2,\mathbb{R})$ as stated in Theorem \ref{The:SLS}.
The rest of commutation relations read
\begin{equation*}
\begin{array}{lllll}
&[Y_1,Z_0]=0,\quad &[Y_1,Z_1]=0, &[Y_1,Z_2]=Z_1, &[Y_1,Z_3]=2Z_2,\\
&[Y_2,Z_0]=0,\quad &[Y_2,Z_1]=-Z_1, &[Y_2,Z_2]=0, &[Y_2,Z_3]=Z_3,\\
&[Y_3,Z_0]=0,\quad &[Y_3,Z_1]=-2Z_2, &[Y_3,Z_2]=-Z_3, &[Y_3,Z_3]=0.
\end{array}
\end{equation*}
Moreover,
\begin{equation*}
\begin{array}{llll}
&[Y_1,W_1]=0,\quad &[Y_1,W_2]=W_1,\quad &[Y_1,W_3]=2W_2\\
&[Y_2,W_1]=-W_1,\quad &[Y_2,W_2]=0,\quad &[Y_2,W_3]=W_3,\\
&[Y_3,W_1]=-2W_2,\quad &[Y_3,W_2]=-W_3,\quad &[Y_3,W_3]=0
\end{array}
\end{equation*}
and
\begin{equation}\label{comm3}
\begin{gathered}
\left[Z_0,W_j\right]=Z_j,\quad [Z_i,W_j]=0,\quad  [W_i,W_j]=0,\quad i,j=1,2,3,\\ 
[Z_\alpha,Z_\beta]=0,\quad \alpha,\beta=0,\ldots,3.
\end{gathered}
\end{equation}
Hence, $A_1=\langle Z_0,Z_1,Z_2,Z_3 \rangle$ is an ideal of $A_1+A_2$. And $A_1+A_2$ is an ideal
of $A_1+A_2+V_L$, with $A_2=\langle W_1,W_2,W_3 \rangle$.

\subsubsection{A first-order Riccati equation}

The previous results can be applied in studying the first-order Riccati equation \cite{Ince,Riccati}
\begin{equation}\label{Ricc}
\frac{{\rm d}x}{{\rm d}t}=\eta(t)+x^2,
\end{equation}
where $\eta$ is an arbitrary $t$-dependent function.

It is well known that the Riccati equation is a Lie system with a Vessiot--Guldberg Lie algebra isomorphic to $V^{\rm Ric}\simeq \mathfrak{sl}(2,\mathbb{R})$ \cite{CLS13}.
Indeed, equation (\ref{Ricc}) has the associated $t$-dependent vector field $X^{\rm Ric}=X^{\rm Ric}_3+\eta(t)X^{\rm Ric}_1,$ where
\begin{equation}\label{VecRic}
X^{\rm Ric}_1=\frac{\partial}{\partial x},\quad X^{\rm Ric}_2=x\frac{\partial}{\partial x} \quad X^{\rm Ric}_3=x^2\frac{\partial}{\partial x}
\end{equation}
satisfy the same commutation relations as $X_1,X_2,X_3$ in (\ref{ComSl2}). In view of this, (\ref{Ricc}) is related to a $t$-dependent vector field taking values in a finite-dimensional Lie algebra of vector fields isomorphic to $\mathfrak{sl}(2,\mathbb{R})$. Then, (\ref{Ricc}) is an $\mathfrak{sl}(2,\mathbb{R})$-Lie system. Moreover, we can consider $X^{\rm Ric}$ as a particular case of system (\ref{sl2}). By applying the results of the previous sections to generic $\mathfrak{sl}(2,\mathbb{R})$-Lie systems, we find that the symmetry system for $X^{\rm Ric}$ is of the form (\ref{slsys})  with $b_1=\eta, b_2=0$ and $b_3=1$, namely
\begin{equation}\label{ric2}
\left\{\begin{aligned}
\frac{{\rm d}f_0}{{\rm d}t}&=b_0(t),\\
 \frac{{\rm d}f_1}{{\rm d}t}&=f_0\frac{{\rm d}\eta}{{\rm d}t}(t)-\eta(t) f_2+b_0(t)\eta(t),\\
 \frac{{\rm d}f_2}{{\rm d}t}&= 2f_1-2\eta(t) f_3,\\
 \frac{{\rm d}f_3}{{\rm d}t}&=f_2+b_0(t).
\end{aligned}\right.
\end{equation}

We can recover and generalize the results given in \cite{GS10} by means of our approach. From their expressions in (\ref{ric2}), we can differentiate ${\rm d}f_3/{\rm d}t$ twice and ${\rm d}f_2/{\rm d}t$ once. By substituting ${\rm d}^2f_2/{\rm d}t^2$ in ${\rm d}^3f_3/{\rm d}t^3$ and using the remaining equations in (\ref{ric2}), we obtain that	
$$
\frac{{\rm d}^3f_3}{{\rm d}t^3}=\frac{{\rm d}^3f_0}{{\rm d}t^3}+2\frac{{\rm d}f_1}{{\rm d}t}-2\frac{{\rm d}\eta}{{\rm d}t}f_3-2\eta(t)\frac{{\rm d}f_3}{{\rm d}t}.
$$
By substituting the value of ${{\rm d}f_1}/{{\rm d}t}$ from (\ref{ric2}) and using that $f_2={\rm d}f_3/{\rm d}t-b_0(t)$, we obtain the following equation for $f_3$ in terms of the coefficients of \eqref{Ricc} and $f_0$:
\begin{equation}\label{rCS}
\frac{{\rm d}^3f_3}{{\rm d}t^3}=\frac{{\rm d}^3f_0}{{\rm d}t^3}+4b_0\eta(t)+2\frac{{\rm d}\eta}{{\rm d}t}f_0-2\frac{{\rm d}\eta}{{\rm d}t}f_3-4\eta(t)\frac{{\rm d}f_3}{{\rm d}t}.
\end{equation}
From this, we can retrieve the following corollary given in \cite{GS10}.

\begin{corollary} The Riccati equation (\ref{Ricc}) admits the Lie symmetry
$$
Y=f_0\frac{\partial}{\partial t}-\frac 12 \frac{{\rm d}^2f_0}{{\rm d}t^2}\frac{\partial}{\partial x}-\frac{{\rm d}f_0}{{\rm d}t}x\frac{\partial}{\partial x},
$$
where
\begin{equation}\label{conRicEsp}
\frac{{\rm d}^3f_0}{{\rm d}t^3}+4b_0(t)\eta(t)+2\frac{{\rm d}\eta}{{\rm d}t}f_0=0.
\end{equation}

\end{corollary}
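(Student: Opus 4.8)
The plan is to start from the symmetry system (\ref{ric2}) and specialize the generic element of $\mathcal{S}^{V^{\rm Ric}}_{X^{\rm Ric}}$ by looking for a Lie symmetry whose $f_0$-component is the prescribed function and whose remaining components are determined algebraically by $f_0$ rather than by solving further differential equations. Concretely, I would guess, motivated by the shape of (\ref{ric2}), that $f_3 = f_0$. Under this ansatz the last equation of (\ref{ric2}) forces $b_0 = {\rm d}f_0/{\rm d}t - f_2$, while a natural companion guess $f_2 = 2\,{\rm d}f_0/{\rm d}t$ would make $b_0 = -{\rm d}f_0/{\rm d}t$; but inspection of the claimed symmetry $Y = f_0\,\partial_t - \tfrac12 (\mathrm{d}^2 f_0/\mathrm{d}t^2)\,\partial_x - (\mathrm{d}f_0/\mathrm{d}t)\,x\,\partial_x$ together with the identification $X^{\rm Ric}_1 = \partial_x$, $X^{\rm Ric}_2 = x\,\partial_x$, $X^{\rm Ric}_3 = x^2\,\partial_x$ from (\ref{VecRic}) tells us directly that $f_1 = -\tfrac12\,{\rm d}^2f_0/{\rm d}t^2$, $f_2 = -\,{\rm d}f_0/{\rm d}t$, $f_3 = 0$. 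So the actual plan is cleaner: simply verify that this choice of $(f_0,f_1,f_2,f_3)$ solves (\ref{ric2}) for a suitable $b_0$.

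First I would substitute $f_3 = 0$, $f_2 = -\,{\rm d}f_0/{\rm d}t$ into the fourth equation of (\ref{ric2}): it reads $0 = f_2 + b_0 = -\,{\rm d}f_0/{\rm d}t + b_0$, hence $b_0 = {\rm d}f_0/{\rm d}t$. Next, the first equation ${\rm d}f_0/{\rm d}t = b_0$ is then automatically satisfied — this is the consistency check that pins down $b_0$. Then I would plug $f_1 = -\tfrac12\,{\rm d}^2f_0/{\rm d}t^2$, $f_2 = -\,{\rm d}f_0/{\rm d}t$, $f_3 = 0$ into the third equation of (\ref{ric2}): the right-hand side is $2f_1 - 2\eta(t) f_3 = -\,{\rm d}^2f_0/{\rm d}t^2$, and the left-hand side is ${\rm d}f_2/{\rm d}t = -\,{\rm d}^2f_0/{\rm d}t^2$, so it holds identically with no extra constraint. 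Finally, substituting everything into the second equation of (\ref{ric2}) gives ${\rm d}f_1/{\rm d}t = f_0\,{\rm d}\eta/{\rm d}t - \eta(t) f_2 + b_0(t)\eta(t)$, i.e. $-\tfrac12\,{\rm d}^3f_0/{\rm d}t^3 = f_0\,{\rm d}\eta/{\rm d}t + \eta\,{\rm d}f_0/{\rm d}t + \eta\,{\rm d}f_0/{\rm d}t = f_0\,{\rm d}\eta/{\rm d}t + 2\eta\,{\rm d}f_0/{\rm d}t$; noting $b_0\eta = \eta\,{\rm d}f_0/{\rm d}t$, this rearranges exactly to (\ref{conRicEsp}). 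By Lemma \ref{SLS}, a quadruple $(f_0,f_1,f_2,f_3)$ solving (\ref{ric2}) for some $t$-dependent $b_0$ yields an element of $\mathcal{S}^{V^{\rm Ric}}_{X^{\rm Ric}}$, so $Y = f_0\,\partial_t + f_1 X^{\rm Ric}_1 + f_2 X^{\rm Ric}_2 + f_3 X^{\rm Ric}_3$ is a Lie symmetry of (\ref{Ricc}); unwinding the $X^{\rm Ric}_i$ gives the stated form of $Y$.

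There is really no serious obstacle here — the content of the corollary is an explicit solution of the symmetry system — but the one place to be careful is the bookkeeping when eliminating the intermediate derivatives. In the text this was done by differentiating the fourth equation of (\ref{ric2}) twice and the third once to arrive at (\ref{rCS}); the mild subtlety is that this elimination introduces the term $4b_0\eta(t)$ coming from the $b_0$ appearing in ${\rm d}f_3/{\rm d}t$, and one must keep track of it rather than absorbing $b_0$ prematurely. Once (\ref{rCS}) is in hand, imposing $f_3 = 0$ reduces it directly to (\ref{conRicEsp}), and then reading off $f_1,f_2$ from the (now trivial) remaining equations of (\ref{ric2}) and translating back through (\ref{VecRic}) produces the displayed symmetry. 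Thus the honest proof is: set $f_3 = 0$ in (\ref{rCS}) to get (\ref{conRicEsp}); recover $f_2 = {\rm d}f_3/{\rm d}t - b_0 = -\,{\rm d}f_0/{\rm d}t$ and $f_1 = \tfrac12({\rm d}f_2/{\rm d}t) = -\tfrac12\,{\rm d}^2f_0/{\rm d}t^2$; and conclude via Lemma \ref{SLS} and (\ref{VecRic}).
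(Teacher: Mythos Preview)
Your proposal is correct and, modulo the exploratory digression at the beginning, follows essentially the same route as the paper: set $f_3=0$, read off $f_2=-\,{\rm d}f_0/{\rm d}t$ and $f_1=-\tfrac12\,{\rm d}^2f_0/{\rm d}t^2$ from the third and fourth equations of (\ref{ric2}), and check that the remaining equation reduces to (\ref{conRicEsp}) with $b_0={\rm d}f_0/{\rm d}t$. Your direct substitution into (\ref{ric2}) is a perfectly valid alternative to the paper's use of the intermediate equation (\ref{rCS}), and in fact is slightly more self-contained; your final paragraph then reproduces the paper's argument verbatim.
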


\begin{proof}
Since we are looking for Lie symmetries with $f_3=0$, equation (\ref{rCS}) reduces to (\ref{conRicEsp}). Moreover, by substituting $f_3=0$ in (\ref{ric2}), we obtain that
$$
\frac{{\rm d}f_3}{{\rm d}t}=b_0(t)+f_2=0,\qquad \frac{{\rm d}f_2}{{\rm d}t}=2f_1,
$$
which yields $f_2=-{{\rm d}f_0}/{{\rm d}t}$ and $2f_1=-{{\rm d}^2f_0}/{{\rm d}t^2}$. Hence, the corollary follows.
\end{proof}

Going back to general symmetries of (\ref{Ricc}), we can obtain some of its Lie symmetries by solving (\ref{ric2}) for certain values of $\eta(t)$ summarized in Table \ref{table1}.
\begin{table}[t] {\footnotesize
 \noindent
\caption{{\small Lie symmetries for Riccati equations (\ref{Ricc}) for values of the function $\eta(t)$.} We assume $f_0=k\in\mathbb{R}$. Function $f_3$ can be derived from (\ref{conRicEsp}). Function $f_2$ can be obtained from $f_3$, as
$f_2={\rm d}f_3/{{\rm d}t}$. Function $f_1$ reads 
$f_1=k\eta(t)-\int{\eta(t) f_3{\rm d}t}.$ In the following $Airy_A$ and $Airy_B$ denote the Airy and Bairy functions and $J,Y$ are the Bessel functions of first and second kind.
}
\label{table1}
\medskip
\noindent\hfill
 \begin{tabular}{  p{1.4cm}    p{14.5cm} l}
\hline
\hline
 &\\[-1.9ex]
$\eta$ & $f_3(t)$ \\[+1.0ex]
\hline
 &\\[-1.9ex]
$\dfrac{k}{at+b}$& $\displaystyle{k+c_1(at+b)J^2\left(1,2\sqrt{(at+b)/a^2}\right)+c_2(at+b)Y^2\left(1,2\sqrt{{at+b}/{a^2}}\right)}$\\[+1.0ex]
&$\qquad\qquad\qquad \qquad\qquad\qquad\qquad \qquad\qquad\qquad +\,c_3(at+b)J\left(1,2\sqrt{t+b/{a}}\right)Y\left(1,2\sqrt{t+b/{a}}\right)$ \\[+1.0ex]
$\dfrac{k}{(at+b)^2}$& $\displaystyle{-\frac{kat}{b}+c_1\left(\frac{at+b}{a}\right)^{\frac{a+\sqrt{a^2-4k}}{a}}+c_2\left(\frac{at+b}{a}\right)^{-\frac{-a+\sqrt{a^2-4k}}{a}}+c_3(at+b),}$\\[+1.0ex]
$\displaystyle{at+b}$ &$\displaystyle{k+c_1{\rm Airy}_A\left(-\frac{at+b}{a^{2/3}}\right)^2+c_2{\rm Airy}_B\left(-\frac{at+b}{a^{2/3}}\right)^2+c_3{\rm Airy}_A\left(-\frac{at+b}{a^{2/3}}\right){\rm Airy}_B\left(-\frac{at+b}{a^{2/3}}\right)}$
\\[+1.0ex]
\hline
\hline
 \end{tabular}
\hfill}
\end{table}

\subsubsection{Cayley--Klein Riccati equation}

Let us generalize the results of the previous section to a more general family of Riccati equations over different $\mathbb{R}$-linear algebras of numbers, namely real vector spaces along with a certain bilinear product.
In what follows, we summarize roughly some results given in~\cite{Yaglom} (see from page 265).

Consider the real plane with coordinates $\{x,y\}$ and an ``additional" unit $\iota$ such that
\begin{equation}
\label{ba}
\iota^2 \in\{ -1,+1,0\}    .
\end{equation}
Next, we define the real vector space of elements $z$ of the form
\begin{equation}
\label{bb}
 z:\equiv x+{\bf \iota} y,
\end{equation}
with the standard addition and multiplication: $(x_1+\iota y_1) +(x_2+\iota y_2)=x_1+x_2+\iota (y_1+y_2)$ and $\lambda(x_1+\iota y_1):=\lambda x_1+\iota \lambda y_1$ for arbitrary $x_1,x_2,y_1,y_2,\lambda \in\mathbb{R}$. We define $\bar z\equiv x-\iota y$ and a bilinear multiplication in the usual way: $(x_1+\iota y_1)(x_2+\iota y_2)=x_1x_2+\iota^2y_1y_2+\iota(x_1y_2+y_1x_2)$. Hence,
\begin{equation}
\label{bc}
 |z|^2\equiv  z\bar z= x^2-\iota^2 y^2,\qquad z^2=x^2+\iota^2 y^2 +2 \iota x y.
\end{equation}
In this way, we obtain three different algebra structures on $\mathbb{R}^2$ over $\mathbb{R}$, namely:

 \begin{itemize}

 \item Case $\iota^2=-1$. We say that $\iota$ is an {\it elliptical} additional unit. Here we are dealing with the usual complex numbers: $\iota\equiv {\rm i}$ and $z\in\mathbb C$. Hence,
\begin{equation}
 \label{bca}
   |z|^2= z\bar z= x^2+ y^2,\qquad z^2=x^2- y^2 +2 {\rm i} x y,\qquad z\in\mathbb C.
\end{equation}

  \item Case $\iota^2=+1$. We call $\iota$ a {\it hyperbolic} additional unit and we are dealing with the  so-called {\it split complex numbers} $z\in\mathbb C^\prime$. The additional unit is usually known as {\it double} or {\it Clifford} unit $\iota\equiv {\rm e}$~\cite{Yaglom}. Thus,
 \begin{equation}
 \label{bcb}
   |z|^2= z\bar z= x^2- y^2,\qquad z^2=x^2+y^2 +2 {\rm e} x y,\qquad z\in \mathbb C^\prime.
\end{equation}

  \item Case $\iota^2=0$. This last possibility corresponds to a {\it parabolic} additional unit, $\iota\equiv \varepsilon$, and $z$ is known as a {\it dual} or {\it Study} number $z\in\mathbb D$ ~\cite{Yaglom}  which can be regarded as a {\it contracted} case since
   \begin{equation}
 \label{bcc}
   |z|^2= z\bar z= x^2 ,\qquad z^2=x^2+ 2 \varepsilon x y,\qquad z\in \mathbb D.
\end{equation}

  \end{itemize}

We define the {\it Cayley--Klein  Riccati equation} to be the generalization of the standard Riccati equation to $z\in \{ \mathbb C,\mathbb C^\prime, \mathbb D\}$:
\begin{equation}\label{CKRE}
\frac{{\rm d} z}{{\rm d} t}=b_1(t)+b_2(t)z+b_3(t)z^2,\qquad z= x+\iota y.
\end{equation}
For real $t$-dependent coefficients $b_1$, $b_2$ and $b_3$, this equation gives rise to a system $X^{\rm CK}$ of two real differential equations
\begin{equation}\label{CKRE2}
\frac{{\rm d} x}{{\rm d} t}=b_1(t)+b_2(t)x+b_3(t)(x^2+\iota ^2y^2),\qquad \frac{{\rm d} y}{{\rm d} t}=b_2(t)y+2b_3(t)xy.
\end{equation}

The system (\ref{CKRE2}) can be described by means of the $t$-dependent vector field
\begin{equation*}
X^{\rm CK}=b_1(t)X^{\rm CK}_1+b_2(t)X^{\rm CK}_2+b_3(t)X^{\rm CK}_3,
\end{equation*}
where 
\begin{equation*}
X^{\rm CK}_1=\frac{\partial}{\partial x},\quad X^{\rm CK}_2=x\frac{\partial}{\partial x}+y\frac{\partial}{\partial y},\quad X_3^{\rm CK}=(x^2+\iota ^2 y^2)\frac{\partial}{\partial x}+2xy\frac{\partial}{\partial y}
\end{equation*}
satisfy the same commutation relations  as $X_1,X_2,X_3$ in (\ref{ComSl2}). Since $X_1,X_2,X_3$ span a Lie algebra $V\simeq \mathfrak{sl}(2,\mathbb{R})$, the vector fields $X^{\rm CK}_1,X^{\rm CK}_2$ and $X^{\rm CK}_3$ span a real Lie algebra, $V^{\rm CK}$, which is isomorphic to the Lie algebra $\mathfrak{sl}(2,\mathbb{R})$ independently of the square of $\iota$. Then, \eqref{CKRE2} is an $\mathfrak{sl}(2,\mathbb{R})$-Lie system. 

Moreover, we can define a the Lie algebra isomorphism $\phi:V^{\rm CK}\rightarrow V^{\rm }$ mapping
$$
\phi(X_\alpha^{\rm CK})=X_{\alpha},\qquad \alpha=1,2,3,
$$
where $X_\alpha$, with $\alpha=1,2,3$, are the generic vector fields in (\ref{ComSl2}).
Hence, we have that $\phi (X_t^{\rm CK})=X_t$ for every $t\in \mathbb{R}$ with $X$ being the generic $\mathfrak{sl}(2,\mathbb{R})$-Lie system (\ref{sl2}). Then, (\ref{CKRE2}) is isomorphic to  (\ref{sl2}) and, in view of Proposition \ref{SymTheo}, the symmetry system of $X^{\rm CK}$ becomes (\ref{slsys}).

\subsubsection{Quaternionic Riccati equation with t-dependent real coefficients}
Quaternions are the elements of the real vector space $\mathbb{H}\simeq\mathbb{R}^4$ of elements of the form $q=q_0+q_1{\rm i}+q_2{\rm j}+q_3{\rm k}$, where $q_0,q_1,q_2,q_3\in \mathbb{R}$ and with the standard sum of elements and multiplication by real numbers. We can define a multiplication of quaternions (see \cite{CS03} for details) by assuming that real numbers commute with all quaternions and
that the following operations are fulfilled
$$
\begin{gathered}
{\rm i}^2=-1,\quad {\rm j}^2=-1,\quad {\rm k}^2=-1,\\
{\rm i\cdot j}=-{\rm j\cdot i}={\rm k},\quad {\rm k\cdot i}=-{\rm i\cdot k}={\rm j},\quad {\rm j\cdot k}=-{\rm k\cdot j}={\rm i}.
\end{gathered}
$$

The {\it quaternionic Riccati equation}  \cite{W714} takes the form
\begin{equation}\label{qric}
\frac{{\rm d}q}{{\rm d}t}=b_1(t)+a_1(t)q+qa_2(t)+qb_3(t)q,
\end{equation}
where $q$ and the $t$-dependent functions $a_1,a_2,b_1,b_3:\mathbb{R}\rightarrow \mathbb{H}$ take values in $\mathbb{H}$ \cite{CS03}. The existence of periodic solutions for particular cases of (\ref{qric}) has been studied in \cite{CM06,W714} and, for real $t$-dependent coefficients, in \cite{W714}. In this work, we focus on the latter case.

Writing $q$ in coordinates, we obtain that \eqref{qric} reads
\begin{equation}
\left\{\begin{aligned}\label{qLS}
\frac{{\rm d}q_0}{{\rm d}t}&=b_1(t)+b_2(t)q_0+b_3(t)(q_0^2-q_1^2-q_2^2-q_3^2),\\
\frac{{\rm d}q_1}{{\rm d}t}&=b_2(t)q_1+2b_3(t)q_0q_1,\\
\frac{{\rm d}q_2}{{\rm d}t}&=b_2(t)q_2+2b_3(t)q_0q_2,\\
\frac{{\rm d}q_3}{{\rm d}t}&=b_2(t)q_3+2b_3(t)q_0q_3,
\end{aligned}\right.
\end{equation}
where $b_2(t)\equiv a_1(t)+a_2(t).$ This system is related to the $t$-dependent vector field
\begin{equation}
X^{\mathbb{H}}=b_1(t)X^{\mathbb{H}}_1+b_2(t)X^{\mathbb{H}}_2+b_3(t)X^{\mathbb{H}}_3,
\end{equation}
with$$
\begin{gathered}
X^{\mathbb{H}}_1=\frac{\partial}{\partial q_0},\qquad X^{\mathbb{H}}_2=q_0\frac{\partial}{\partial q_0}+q_1\frac{\partial}{\partial q_1}+q_2\frac{\partial}{\partial q_2}+q_3\frac{\partial}{\partial q_3},
\\
X^{\mathbb{H}}_3=2q_0\left(q_1\frac{\partial}{\partial q_1}+q_2\frac{\partial}{\partial q_2}+q_3\frac{\partial}{\partial q_3}\right)+(q_0^2-q_1^2-q_2^2-q_3^2)\frac{\partial}{\partial q_0},\\
\end{gathered}$$
which satisfy the same commutation relations as $X_1,X_2$ and $X_3$ in (\ref{ComSl2}). So, $X_1^{\mathbb{H}},X_2^{\mathbb{H}},X_3^{\mathbb{H}}$ span a real Lie algebra $V^{\mathbb{H}}$ isomorphic to $\mathfrak{sl}(2,\mathbb{R})$. Hence, the quaternionic Riccati equation \eqref{qric} with $t$-dependent real coefficients is an $\mathfrak{sl}(2,\mathbb{R})$-Lie system isomorphic to (\ref{sl2}) with respect to the Lie algebra isomorphism $\phi:V^{\mathbb{H}}\rightarrow V$ being given by
$$
\phi(X^{\mathbb{H}}_\alpha)=X_\alpha,\qquad \alpha=1,2,3.
$$
Moreover, we have that $\phi(X^\mathbb{H}_t)=X_t$ for every $t\in\mathbb{R}$. Hence, the symmetry system for (\ref{qric}) is (\ref{slsys}). If we assume for instance $f_0=k\in \mathbb{R}$, and $b_1(t)=\eta(t)$, $b_2(t)=0$ and $b_3(t)=1$, we obtain that (\ref{qric}) is isomorphic to the Lie system (\ref{ric2}). Hence, for certain values of $\eta(t)$ given in Table \ref{table1}, we can derive several Lie symmetries for quaternionic Riccati equations.

\subsubsection{The generalized Darboux--Brioschi--Halphen system}
We now show that our theory in particular, and the whole theory of Lie systems in general, can be used to study autonomous systems of first-order ODEs.
We consider the generalized Darboux--Brioschi--Halphen system (DBH system)\cite{Darboux}:
\begin{equation}\label{Partial}
\left\{
\begin{aligned}
\frac{{\rm d}w_1}{{\rm d}t}&=w_3w_2-w_1w_3-w_1w_2+\tau^2,\\
\frac{{\rm d}w_2}{{\rm d}t}&=w_1w_3-w_2w_1-w_2w_3+\tau^2,\\
\frac{{\rm d}w_3}{{\rm d}t}&=w_2w_1-w_3w_2-w_3w_1+\tau^2,\\
\end{aligned}\right.
\end{equation}
where
$$
\tau^2\equiv\alpha_1^2(\omega_1-\omega_2)(\omega_3-\omega_1)+\alpha_2^2(\omega_2-\omega_3)(\omega_1-\omega_2)+\alpha_3^2(\omega_3-\omega_1)(\omega_2-\omega_3)
$$
and $\alpha_1,\alpha_2,\alpha_3$ are real constants.

When $\tau=0$, system (\ref{Partial}) retrieves the classical DBH system solved by Halphen \cite{CH03,Darboux,Halphen} which appears in the study of triply orthogonal surfaces and the vacuum Einstein equations for hyper-K\"ahler Bianchi-IX metrics. For $\tau\neq 0$, the generalized DBH system can be considered as a
 reduction of the self-dual Yang--Mills equations corresponding to an infinite-dimensional gauge group of diffeomorphisms of a
three-dimensional sphere \cite{CH03}. 

It can be proved that (\ref{Partial}) is an $\mathfrak{sl}(2,\mathbb{R})$-Lie system. Indeed, it is the associated system to the $t$-dependent vector field
\begin{multline*}
    X_t^{\rm DBH}=(w_3w_2-w_1(w_3+w_2)+\tau^2)\frac{\partial}{\partial w_1} + (w_1w_3-w_2(w_1+w_3)+\tau^2)\frac{\partial}{\partial w_2}\\ + (w_2w_1-w_3(w_2+w_1)+\tau^2)\frac{\partial}{\partial w_3}=-X_3^{\rm DBH}\,.
\end{multline*}
This vector field span a Lie algebra $V^{\rm DBH}$ of vector fields along with
\[
    X_1^{\rm DBH}=\frac{\partial}{\partial w_1} + \frac{\partial}{\partial w_2} + \frac{\partial}{\partial w_3},\quad X_2^{\rm DBH}=w_1\frac{\partial}{\partial w_1} + w_2\frac{\partial}{\partial w_2} + w_3\frac{\partial}{\partial w_3}
\]
satisfying the commutation relations (\ref{ComSl2}). In consequence, $X_1^{\rm DBH},X_2^{\rm DBH}$ and $X_3^{\rm DBH}$ span a three-dimensional Lie algebra of vector fields $V^{\rm DBH}$ isomorphic to $\mathfrak{sl}(2,\mathbb{R})$ and then $X^{\rm DBH}$ is an $\mathfrak{sl}(2,\mathbb{R})$-Lie system. Since $X_1,X_2,X_3$ admit the same structure constants as (\ref{ComSl2}), the symmetry system for $X^{\rm DBH}$ becomes (\ref{slsys}) with $b_1(t)=b_2(t)=0$ and $b_3(t)=-1$, namely
\begin{equation*}
\frac{{\rm d}f_0}{{\rm d}t}=b_0(t),\qquad \frac{{\rm d}f_1}{{\rm d}t}=0,\qquad \frac{{\rm d}f_2}{{\rm d}t}=-2f_1,\qquad  \frac{{\rm d}f_3}{{\rm d}t}=-f_2-b_0(t).
\end{equation*}

 Hence, for $b_0(t)=0$, we obtain
 $$
 Y=t_0\frac{\partial}{\partial t}+\lambda_1 X_1-(2\lambda_1 t-\lambda_2)X_2+(\lambda_1t^2-\lambda_2t+\lambda_3)X_3,\qquad \lambda_1,\lambda_2,\lambda_3,t_0\in\mathbb{R}.
 $$
 Evidently, these vector fields span a Lie algebra of Lie symmetries isomorphic to $\mathfrak{sl}(2,\mathbb{R})$ for $t_0=0$. For $b_0(t)=c_0$, we obtain
 $$
 Y=(c_0t+t_0)\frac{\partial}{\partial t}+\lambda_1 X_1-(2\lambda_1 t-\lambda_2)X_2+(\lambda_1t^2-(\lambda_2+c_0)t+\lambda_3)X_3,\quad \lambda_1,\lambda_2,\lambda_3,t_0,c_0\in\mathbb{R}.
 $$
 Finally, for $b_0(t)=c_0t$, we get
 $$
 Y=\left(t_0+\frac{c_0t^2}2\right)\frac{\partial}{\partial t}+\lambda_1 X_1-(2\lambda_1 t-\lambda_2)X_2+\left[\lambda_1t^2-\left(\lambda_2+\frac{c_0t}2\right)t+\lambda_3\right]X_3,\,\, 
 $$
with $\lambda_1,\lambda_2,\lambda_3,t_0,c_0\in\mathbb{R}$. Since $\mathfrak{W}=\langle 1,t,t^2\rangle$ is a Lie algebra with respect to the Lie bracket, $\{\cdot,\cdot\}_\mathbb{R}$, we obtain in view of Proposition \ref{finite} that
 $$
 \mathcal{S}^V_{X,\mathfrak{W}}=\{Y\in \mathcal{S}^V_X|  f_0\in \langle 1,t,t^2\rangle\}
 $$
 is a Lie algebra of Lie symmetries. By choosing appropriately the constant coefficients of the above vector fields and setting $\tau=0$, we recover the Lie algebra of symmetries isomorphic to $\mathfrak{sl}(2,\mathbb{R})$ for classical DBH systems given by Nucci in \cite[p. 295]{Nu05}.
\subsection{Applications to systems of HODEs}

The study of systems of HODEs through Lie systems implies the addition of extra variables
in order to express a higher-order system as a first-order one. The introduction of these extra
variables frequently results in the obtainment of non-local Lie symmetries, as we shall exemplify in forthcoming examples.
Certain non-local Lie symmetries can be identified with the prolongations of certain vector fields.

\begin{lemma}
 Given a Lie algebra of Lie point symmetries $V$, its prolongations form a Lie algebra $\widehat{V}$ isomorphic to the former.
\end{lemma}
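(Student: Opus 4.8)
The plan is to exhibit the prolongation operation as an injective Lie-algebra homomorphism. Write $\mathrm{pr}\colon X\mapsto \widehat X$ for the map sending a vector field $X=\xi\,\partial/\partial t+\sum_a\phi^a\,\partial/\partial x_a$ on the space of independent and dependent variables to its prolongation $\widehat X$ on the relevant jet bundle, where $\widehat X$ is obtained from $X$ by appending the standard prolonged coefficients $\phi^a_J$ (built from $\xi,\phi^a$ by iterated total derivatives) in front of the derivative coordinates. Then $\widehat V=\mathrm{pr}(V)$, and it suffices to prove that $\mathrm{pr}$ is linear, injective, and bracket-preserving, i.e.\ $\widehat{[X,Y]}=[\widehat X,\widehat Y]$: the bracket identity shows at once that $\widehat V$ is closed under the Lie bracket (hence is a Lie algebra), and linearity together with injectivity and bracket-preservation make $\mathrm{pr}\colon V\to\widehat V$ a bijective Lie-algebra homomorphism, so $\widehat V\simeq V$.

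Linearity and injectivity are immediate. The prolonged coefficients $\phi^a_J$ depend $\mathbb R$-linearly on $(\xi,\phi^a)$, since total derivatives and the algebraic combinations in the prolongation formula are linear in them; hence $\mathrm{pr}(X+\lambda Y)=\mathrm{pr}(X)+\lambda\,\mathrm{pr}(Y)$. For injectivity, observe that $\widehat X$ projects onto $X$ under the natural bundle projection from the jet space to the base space of independent and dependent variables — prolongation only adds components along the derivative coordinates and reproduces $\xi,\phi^a$ on the base — so $\widehat X=0$ forces $\xi=\phi^a=0$, i.e.\ $X=0$.

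The substantive point, and the step I expect to be the main obstacle, is the identity $\widehat{[X,Y]}=[\widehat X,\widehat Y]$. The conceptually cleanest argument uses that prolongation is functorial for local diffeomorphisms and compatible with flows: if $\{g^X_s\}$ is the local flow of $X$, then $\{\widehat{g^X_s}\}$ (the prolongations to the jet bundle of the diffeomorphisms $g^X_s$) is exactly the local flow of $\widehat X$, and prolongation respects composition and the identity, $\widehat{g\circ h}=\widehat g\circ\widehat h$ and $\widehat{\mathrm{id}}=\mathrm{id}$. Expressing $[X,Y]$ through the commutator curve of the flows of $X$ and $Y$ — so that $[X,Y]$ is recovered by differentiating $s\mapsto g^Y_{-s}\circ g^X_{-s}\circ g^Y_{s}\circ g^X_{s}$ appropriately at $s=0$ — and applying $\mathrm{pr}$, which carries this curve to the corresponding commutator curve of the flows of $\widehat X$ and $\widehat Y$, yields the desired identity. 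Alternatively, one may argue by induction on the jet order straight from the recursive definition of the prolonged coefficients (each the total derivative of a lower-order coefficient, corrected by a term involving the total derivatives of $\xi$); the inductive step then reduces to checking that the total-derivative operators and the bracket of vector fields interact in the expected way, a routine if lengthy computation. Once $\widehat{[X,Y]}=[\widehat X,\widehat Y]$ is in hand, $\mathrm{pr}$ is a Lie-algebra monomorphism and $\widehat V=\mathrm{pr}(V)\simeq V$, as asserted.
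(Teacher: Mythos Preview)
Your proposal is correct, and its overall architecture---show that prolongation is $\mathbb{R}$-linear, injective, and bracket-preserving, hence an injective Lie-algebra homomorphism onto $\widehat V$---is exactly the paper's. The difference is in how the key identity $\widehat{[X,Y]}=[\widehat X,\widehat Y]$ is obtained. The paper does not use flows or the recursive prolongation formula; instead it invokes the characterization of the prolongation $\widehat X$ as the unique vector field on $\mathbb{R}\times {\rm T}^pN$ that (i) projects to $X$ under the jet projection $J^p_\pi$ and (ii) preserves the contact ideal, in the sense that $\mathcal{L}_{\widehat X}\theta^k_i$ lies in the ideal generated by the contact forms $\theta^k_i$. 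Since $\mathcal{L}_{[\widehat X,\widehat Y]}=\mathcal{L}_{\widehat X}\mathcal{L}_{\widehat Y}-\mathcal{L}_{\widehat Y}\mathcal{L}_{\widehat X}$ again preserves the contact ideal, and $J^p_{\pi*}[\widehat X,\widehat Y]=[X,Y]$, the bracket $[\widehat X,\widehat Y]$ satisfies both defining conditions for the prolongation of $[X,Y]$, whence the identity. Your flow-functoriality argument is equally clean and has the virtue of working in any setting where prolongation of local diffeomorphisms is defined; the contact-form route is perhaps shorter once the uniqueness characterization is granted, and it makes the geometric meaning of prolongation (preservation of the contact structure) do all the work. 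Your alternative inductive argument is the most elementary but also the longest of the three.
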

\begin{proof}
 Consider the map
\begin{equation*}
\begin{aligned}
\Phi:\Gamma({\rm T}(\mathbb{R}\times N))&\rightarrow \Gamma({\rm T}[\mathbb{R}\times {\rm T}^{p}N])\\
X&\mapsto \widehat{X},
\end{aligned}
\end{equation*}
mapping sections of ${\rm T}(\mathbb{R}\times N)$, i.e., vector fields on $\mathbb{R}\times N$, onto sections of ${\rm T}(\mathbb{R}\times {\rm T}^pN)$, i.e., vector fields on the manifold $\mathbb{R}\times{\rm T}^pN$, where ${\rm T}^pN$ is the so-called
{\it $p$-order tangent space} \cite{LM87}. Roughly speaking, ${\rm T}^pN$ is the space of equivalence classes of curves in $N$ with the same Taylor expansion up to order $p$. It admits a differentiable structure induced by the variables $x_1,\ldots,x_n$ of the coordinate systems of $N$ and the induced variables $x_i^{k)}$ describing the derivatives  in terms of $t$ up to order $p$ of the coordinates of a curve within $N$. Given the so-called {\it contact one-forms} $\theta^k_{i}={\rm d}x^{k)}_i-x_i^{k+1)}{\rm d}t$ with $i=1,\dots,n$ and $k=0,\ldots,p-1$ on $\mathbb{R}\times {\rm T}^pN$, we say that $\widehat{X}$ is the prolongation of $X\in \Gamma({\rm T}(\mathbb{R}\times N))$ to $\mathbb{R}\times {\rm T}^pN$ if
and only if every $\mathcal{L}_{\widehat X}\theta^k_i$ belongs to the contact ideal spanned by all the contact forms and $J^{p}_{\pi_{*}}\widehat{X}=X$, with $J^p_{\pi*}$ being the tangent map to the projection $J^p_\pi:(t,x_i,x_i^{k)})\in \mathbb{R}\times {\rm T}^pN\mapsto (t,x_i)\in\mathbb{R}\times N$. This implies that $\Phi$ is $\mathbb{R}$-linear and injective. Additionally,
\begin{equation*}
 \mathcal{L}_{[\widehat{X_1},\widehat{X_2}]}\theta^k_i=\left(\mathcal{L}_{\widehat {X_1}}\mathcal{L}_{\widehat {X_2}}-\mathcal{L}_{\widehat {X_2}}\mathcal{L}_{\widehat {X_1}}\right)\theta^k_i
\end{equation*}
belongs to the ideal of the contact forms because  $\mathcal{L}_{\widehat{X_1}}$ and  $\mathcal{L}_{\widehat{X_2}}$ do so.
Moreover,
$$
J^{p}_{\pi_{*}}[\widehat{X_1},\widehat{X_2}]=[J^{p}_{\pi_{*}}\widehat{X_1},J^{p}_{\pi_{*}}\widehat{X_2}]=[X_1,X_2].
$$
Hence, $[\widehat {X_1},\widehat {X_2}]$ must be the prolongation of $[X_1,X_2]$, i.e., $[\widehat {X_1},\widehat {X_2}]=\widehat{[X_1,X_2]}$. In this way, $\Phi$ is a Lie algebra morphism. Obviously, given a Lie algebra of Lie point symmetries $V$, its prolongations form a Lie algebra $\widehat{V}=\Phi(V)$ isomorphic to $V$.
\end{proof}
\begin{definition}
 We call $\widehat{V}$ the Lie algebra whose elements are the prolongations of the elements in ${V}.$
\end{definition}

\subsubsection{Second-order Kummer--Schwarz equation}

As an example of a second-order Lie system possessing a Vessiot--Gulberg Lie algebra isomorphic to $\mathfrak{sl}(2,\mathbb{R})$,
we present the second-order Kummer--Schwarz equation
\begin{equation}\label{KS1}
 \frac{{\rm d}^2x}{{\rm d}t^2}=\frac{3}{2x}\left(\frac{{\rm d}x}{{\rm d}t}\right)^2-2c_0x^3+2\eta(t)x,
\end{equation}
where $c_0$ is a real constant and $\eta(t)$ is an arbitrary $t$-dependent function. The interest of this differential equation is due to its appearance in the study of other remarkable (systems of) ODEs, e.g., Riccati and Ermakov systems, with physical and mathematical applications \cite{CGL11,Co94}.
It is well known that (\ref{KS1}) is a Lie system when it is rewritten as a first-order system by introducing an additional variable $v\equiv {\rm d}x/{\rm d}t$, i.e.,
\begin{equation}\label{KSsys}
\left\{\begin{aligned}
\frac{{\rm d}x}{{\rm d}t}&=v,\\
\frac{{\rm d}v}{{\rm d}t}&=\frac{3}{2}\frac{v^2}{x}-2c_0x^3+2\eta(t)x.
\end{aligned}\right.
\end{equation}
This system is associated with the $t$-dependent vector field
\begin{equation*}
 M^{\rm KS}=M_3^{\rm KS}+\eta(t)M_1^{\rm KS},
\end{equation*}
where the vector fields
\begin{equation*}
 M_1^{\rm KS}=2x\frac{\partial}{\partial v},\quad M_2^{\rm KS}=x\frac{\partial}{\partial x}+2v\frac{\partial}{\partial v},\quad M_3^{\rm KS}=v\frac{\partial}{\partial x}+\left(\frac{3}{2}\frac{v^2}{x}-2c_0x^3\right)\frac{\partial}{\partial v}
\end{equation*}
satisfy the same commutation relations as the vector fields $X_1,X_2,X_3$ in (\ref{ComSl2}) and they therefore span a Vessiot--Guldberg Lie algebra isomorphic to $\mathfrak{sl}(2,\mathbb{R})$.
Since the basis $M_1^{\rm KS},M_2^{\rm KS},M_3^{\rm KS}$ have the same structure constants as the $\mathfrak{sl}(2,\mathbb{R})$-Lie systems analyzed in Section \ref{LSSL2}, we can define, for instance, a Lie algebra morphism $\phi$ mapping $M_1^{\rm KS},M_2^{\rm KS},M_3^{\rm KS}$ to the basis $X^{\rm Ric}_1,X^{\rm Ric}_2,X^{\rm Ric}_3$ for Riccati equations (\ref{Ricc}). In such a case, $M^{\rm KS}$ maps to $X^{\rm Ric}$ and in view of Proposition \ref{SymTheo}, the symmetry system for $M^{\rm KS}$ is of the form of the symmetry system (\ref{ric2}). As a consequence, the particular solutions $f_0$, $f_1$, $f_2$ and $f_3$ for Riccati equations, detailed in Table \ref{table1}, are valid for (\ref{KSsys}) as well.

\subsection{Lie symmetries for Aff(R)-Lie systems}

In this section, we aim to obtain elements of $\mathcal{S}^{V}_X$ for Aff$(\mathbb{R})$-Lie systems. We choose a basis of vector fields $\{X_1,X_2\}$ of $V\simeq$ Aff$(\mathbb{R})$
with $[X_1,X_2]=X_1,$
and express the most general Lie system with Vessiot--Guldberg Lie algebra $V$ as $X=a(t)X_1+b(t)X_2.$
Let us now look for its Lie symmetries of the form
\begin{equation}\label{wzor}
Y=f_0(t)\frac{\partial}{\partial t}+f_1(t)X_1+f_2(t)X_2.
\end{equation}

The symmetry condition gives rise to a symmetry system
\begin{equation}\label{Asys}
\left\{\begin{aligned}
 \frac{{\rm d}f_0}{{\rm d}t}&=b_0(t),\\
\frac{{\rm d}f_1}{{\rm d}t}&=f_0\frac{{\rm d}a}{{\rm d}t}(t)+a(t)b_0(t)+b(t)f_1-a(t)f_2,\\
\frac{{\rm d}f_2}{{\rm d}t}&=f_0\frac{{\rm d}b}{{\rm d}t}(t)+b(t)b_0(t),
\end{aligned}\right.
\end{equation}
associated with the $t$-dependent vector field
$$\Gamma^{{\rm Aff}(\mathbb{R})}_X=b_0(t)Z_0+\frac{{\rm d} a}{{\rm d} t}W_1+\frac{{\rm d} b}{{\rm d} t}W_2+b_0(t)a(t)Z_1+b_0(t)b(t)Z_2+a(t)Y_1+b(t)Y_2$$
where 
\begin{equation*}\label{com4}
Y_1=-f_2\frac{\partial}{\partial f_1},\quad Y_2=f_1\frac{\partial}{\partial f_1},\quad W_1=f_0\frac{\partial}{\partial f_1},\quad W_2=f_0\frac{\partial}{\partial f_2},\quad Z_\alpha=\frac{\partial}{\partial f_\alpha},\quad \alpha=0,1,2.
\end{equation*}
Since, $[Y_1,Y_2]=Y_1$, then $V_L=\langle Y_1,Y_2\rangle$ gives rise to a Lie algebra isomorphic to Aff$(\mathbb{R}).$
Moreover, 
\begin{equation}\label{comm4}
\begin{gathered}
\left[Z_0,W_1\right]=Z_1,\quad [Z_0,W_2]=Z_2,\quad [Z_i,W_j]=0,\quad i,j=1,2,\quad [Z_\alpha,Z_\beta]=0,\quad \alpha,\beta=0,1,2,\\
[Y_1,W_1]=0,\quad [Y_1,W_2]=W_1,\quad [Y_1,Z_0]=0,\quad [Y_1,Z_1]=0,\quad [Y_1,Z_2]=Z_1,\\
[Y_2,W_1]=-W_1,\quad [Y_2,W_2]=0,\quad [Y_2,Z_0]=0,\quad [Y_2,Z_1]=-Z_1,\quad [Y_2,Z_2]=0.\\
\end{gathered}
\end{equation}
In this way, $A_1=\langle Z_0,Z_1,Z_2\rangle$ is an ideal of $A_1+A_2$, with $A_2=\langle W_1,W_2\rangle$  and $A_1\oplus_S A_2$ is an ideal of $A_1+A_2+V_L$. Hence, system (\ref{Asys}) possesses a Vessiot--Guldberg Lie algebra 
\begin{equation}\label{VG}
(A_1\oplus_S A_2)\oplus_S V_L\simeq (\mathbb{R}^3\oplus_S\mathbb{R}^2)\oplus_S{\rm Aff}(\mathbb{R}).
\end{equation}
We can solve (\ref{Asys}) when $b_0(t)=0$. In this case, $f_0=k\in \mathbb{R}$ and (\ref{Asys}) reduces to a  trivial equation for $f_2$ and a linear one for $f_1$. The general solution reads:

\begin{equation}\label{sol}
\begin{gathered}
f_1(t)=\left[\int_0^t{\left[k\frac{{\rm d}a}{{\rm d}t'}(t')-a(t')(kb(t')+c_1)\right]e^{-\int_0^{t'}{b(t''){\rm d}t''}}{\rm d}t'}+c_2\right]e^{\int_0^{t}{b(t'){\rm d}t'}},\\
f_2(t)=kb(t)+c_1,
\end{gathered}
\end{equation}
where $c_1$ and $c_2$ are constants of integration.

In order to illustrate Aff$(\mathbb{R})$-Lie systems through a physical example, we consider the Buchdahl equation \cite{Bu64,CSL05,CN10} 
$$
\frac{{\rm d}^2x}{{\rm d}t^2}=f(x)\left(\frac{{\rm d}x}{{\rm d}t}\right)^2+a_2(t)\frac{{\rm d}x}{{\rm d}t},
$$
where $f$ is any function depending on $x$ and $a_2$ is any $t$-dependent function. The interest of this system is due to its appearance in General Relativity \cite{CN10}. We add a new variable $v\equiv {\rm d}x/{\rm d}t$ to obtain
\begin{equation}\label{Buchdahl}
\left\{\begin{aligned}
\frac{{\rm d}x}{{\rm d}t}&=v,\\
\frac{{\rm d}v}{{\rm d}t}&=f(x)v^2+a_2(t)v,
\end{aligned}\right.
\end{equation}
the hereafter called {\it Buchdahl system}. It is known that (\ref{Buchdahl}) is a Lie system \cite{BBHLS13}. Indeed, (\ref{Buchdahl}) describes the integral curves of the $t$-dependent vector field
$$
X^{BD}=v\frac{\partial}{\partial x}+(f(x)v^2+a_2(t)v)\frac{\partial}{\partial
v}=X_1-a_2(t)X_2,
$$
where
$$
X_1^{BD}=v\frac{\partial}{\partial x}+ f(x)v^2\frac{\partial}{\partial v},\qquad
X_2^{BD}=-v\frac{\partial }{\partial v},
$$
satisfy $[X_1,X_2]=X_1$. That is, $X$ is an ${\rm Aff}(\mathbb{R})$-Lie system. By applying Theorem \ref{The:SLS}, we see that the Lie symmetries of (\ref{Buchdahl}) of the form
$$
Y=f_0(t)\frac{\partial}{\partial t}+f_1(t)X_1+f_2(t)X_2
$$
are determined by the first-order system (\ref{Asys}) with $a(t)=1$ and $b(t)=-a_2(t)$, i.e.,
\begin{equation}\label{Buchdahl2}
\left\{\begin{aligned}
\frac{{\rm d}f_0}{{\rm d}t}&=b_0(t),\\
\frac{{\rm d}f_1}{{\rm d}t}&=b_0(t)-a_2(t)f_1-f_2,\\
\frac{{\rm d}f_2}{{\rm d}t}&=-f_0\frac{{\rm d}a_2}{{\rm d}t}-b_0(t)a_2(t),
\end{aligned}\right.
\end{equation}
which is associated to the $t$-dependent vector field
\begin{equation}
\Gamma_X^{{\rm BD}}=b_0(t)Z_0-\frac{{\rm d}a_2}{{\rm d}t}W_2+b_0(t)Z_1-a_2(t)b_0(t)Z_2+Y_1-a_2(t)Y_2,
\end{equation}
where the vector fields $Z_0,Z_1, Z_2,Y_1,Y_2,W_2$ are those detailed in \eqref{com4} and have
the commutation relations \eqref{comm4}. Hence, these vector fields span a Lie algebra (\ref{VG}). Therefore, we can obtain the Lie symmetries for this system of the form (\ref{wzor}) by substituting $a(t)=1$ and $b(t)=-a_2(t)$ in (\ref{sol}).
\subsection{A Painlev\'e--Ince equation}
Let us study the Lie symmetries of the Painlev\'e--Ince equation \cite{AS94,Ince,KL09} 
\begin{equation}\label{Ricc3}
\frac{{\rm d}^2x}{{\rm d}t^2}=-3x\frac{{\rm d}x}{{\rm d}t}-x^3.
\end{equation}
This differential equation has been widely studied in many works (see \cite{CLSecOrd} and references therein) due to its interesting properties
and applications. From the point of view of the theory of Lie systems, it is a HODE Lie system, i.e., by adding
the new variable $v\equiv {\rm d}x/{\rm d}t$, we obtain a system
\begin{equation}\label{Ricc31}
\left\{\begin{aligned}
\frac{{\rm d}x}{{\rm d}t}&=v,\\
\frac{{\rm d}v}{{\rm d}t}&=-3xv-x^3,
\end{aligned}\right.
\end{equation}
which can be viewed as an $\mathfrak{sl}(3,\mathbb{R})$-Lie system with a Vessiot--Gulberg Lie algebra $V$ spanned by the vector fields (see \cite{CLSecOrd})
\begin{equation}\label{VF}
\begin{aligned}
X_1&=v\frac{\partial}{\partial x}-(3xv+x^3)\frac{\partial}{\partial v},\,\, &X_2&=\frac{\partial}{\partial  v},\\
X_3&=-\frac{\partial}{\partial x}+3x\frac{\partial}{\partial v},\,\, &X_4&=x\frac{\partial}{\partial x}-2x^2\frac{\partial}{\partial v},\\
X_5&=(v+2x^2)\frac{\partial}{\partial x}-x(v+3x^2)\frac{\partial}{\partial v},\,\, &X_6&=2x(v+x^2)\frac{\partial}{\partial  x}+2(v^2-x^4)\frac{\partial}{\partial v},\\
X_7&=\frac{\partial}{\partial x}-x\frac{\partial}{\partial v},\,\,
&X_8&=2x\frac{\partial}{\partial x}+4v\frac{\partial}{\partial v}.
\end{aligned}
\end{equation}
As a consequence, we can apply our theory to obtain Lie symmetries of this HODE Lie system. Moreover, as (\ref{Ricc31}) determines the integral curves of $X_1$, we can search for a vector field within $V$ commuting with $X_1$. Since $[X_1,X_6]=0$ (see \cite{CLSecOrd}), we obtain that $X_6$ is a Lie symmetry for (\ref{Ricc31}) and a non-local Lie symmetry for (\ref{Ricc3}).

\section{A class of Lie symmetries for PDE Lie systems}

Let us generalize in this section some of the previous results to the PDE Lie systems. We hereafter denote $t=(t_1,\dots,t_s)\in \mathbb{R}^s$ and $x=(x_1,\dots,x_n)\in\mathbb{R}^n$.

\begin{theorem}\label{the:MainPDE}Let $V$ be a finite-dimensional real Lie algebra of vector fields on $\mathbb{R}^n$ with a basis $X_{1},\ldots,X_r$.
Given a PDE Lie system
\begin{equation}\label{pdesys}
 \frac{\partial x_i}{\partial t_l}=\sum_{\alpha=1}^rb_{\alpha l}(t)X_{\alpha}(x),\qquad l=1,\ldots,s,\quad i=1,\dots,n,
\end{equation}
then
\begin{equation}\label{SymSpe}
Y(t,x)=\sum_{\beta=1}^rf_{\beta}(t)X_{\beta}(x)
\end{equation}
is a Lie symmetry of (\ref{pdesys}) if and only if $[\bar{X}^l_l,Y]=0$ for every $l$, where $\bar X^l_l$ is the $l$-autonomization given by (\ref{xx2}) of the $t$-dependent vector field
$$
X_l(t,x)\equiv \sum_{\alpha =1}^ rb_{\alpha l }(t)X_\alpha(x),\qquad l=1,\ldots,s.
$$
\end{theorem}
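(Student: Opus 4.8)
The plan is to translate everything into the geometry of the rank-$s$ distribution
$$
\mathcal{D}\equiv\big\langle \bar X^1_1,\ldots,\bar X^s_s\big\rangle
$$
on $\mathbb{R}^s\times N$, where $\bar X^l_l=\partial/\partial t_l+X_l$. First I would recall that, because \eqref{pdesys} is a PDE Lie system, the compatibility (zero-curvature) conditions in the definition of a PDE Lie system are precisely the Frobenius integrability conditions $[\bar X^k_k,\bar X^l_l]\in\Gamma(\mathcal{D})$; hence $\mathcal{D}$ is involutive, and the solutions $t\mapsto x(t)$ of \eqref{pdesys} are exactly the projections to $N$ of the integral leaves of $\mathcal{D}$. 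Consequently a $t$-dependent vector field $Y$ on $N$, regarded as a vector field on $\mathbb{R}^s\times N$ with vanishing $\partial/\partial t_k$-components, is a Lie symmetry of \eqref{pdesys} if and only if its local flow maps $\mathcal{D}$ into itself, i.e. $[\bar X^l_l,Y]\in\Gamma(\mathcal{D})$ for every $l=1,\ldots,s$. This is the natural PDE counterpart of condition \eqref{consym} and is obtained exactly as in \cite{Olver}.

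The second step is an explicit bracket computation. For $Y$ of the form \eqref{SymSpe}, using that $X_1,\ldots,X_r$ do not depend on $t$ (so that $[\partial/\partial t_l,Y]=\sum_\gamma(\partial f_\gamma/\partial t_l)X_\gamma$) and that $[X_l,Y]=\sum_{\alpha,\beta,\gamma=1}^r b_{\alpha l}f_\beta c_{\alpha\beta\gamma}X_\gamma$, one obtains
$$
[\bar X^l_l,Y]=\sum_{\gamma=1}^r\left(\frac{\partial f_\gamma}{\partial t_l}+\sum_{\alpha,\beta=1}^r b_{\alpha l}(t)f_\beta(t)c_{\alpha\beta\gamma}\right)X_\gamma,\qquad l=1,\ldots,s .
$$
In particular $[\bar X^l_l,Y]$ is again a vector field on $\mathbb{R}^s\times N$ taking values in $V$ and having vanishing $\partial/\partial t_k$-components for all $k$.

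The concluding observation is that any section of $\mathcal{D}$ can be written as $\sum_k g_k\bar X^k_k=\sum_k g_k\,\partial/\partial t_k+\sum_k g_k X_k$, and, since $\partial/\partial t_1,\ldots,\partial/\partial t_s$ are linearly independent of each other and, pointwise, of $\Gamma({\rm T}N)$, such a section has vanishing $\partial/\partial t_k$-components precisely when every $g_k$ is zero, i.e. when it is the zero vector field. Combining this with Step 2, the membership $[\bar X^l_l,Y]\in\Gamma(\mathcal{D})$ is equivalent to $[\bar X^l_l,Y]=0$; the reverse implication being trivial, Step 1 then yields the asserted equivalence. (The same argument applies verbatim to the a priori stronger pointwise requirement $[\bar X^l_l,Y]=\sum_k h_{lk}\bar X^k_k$, which again forces all $h_{lk}=0$.)

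I expect the only real difficulty to lie in Step 1: making precise what ``Lie symmetry of the PDE system \eqref{pdesys}'' means — one must either pass through the prolongation to the relevant jet bundle, or use that the PDE Lie system conditions make $\mathcal{D}$ integrable, so that solutions exist through every point and the symmetry condition can be phrased infinitesimally as preservation of $\mathcal{D}$. Once this dictionary is in place, Steps 2 and 3 are routine.
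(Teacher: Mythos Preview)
Your argument is correct, but it proceeds along a genuinely different route from the paper's. The paper works directly with the first jet bundle $J^1\pi\simeq\mathbb{R}^s\times\mathbb{R}^n\times\mathbb{R}^{ns}$: it writes down the prolongation $\widehat{Y}$ of $Y$ explicitly, applies it to the defining functions $F^i_l=x_{i,l}-\sum_\alpha b_{\alpha l}(t)X_\alpha^i(x)$, and then restricts the resulting expression to the solution submanifold $\mathcal{S}=\cap_{i,l}(F^i_l)^{-1}(0)$. After a short index manipulation this becomes, for each fixed $l$, precisely the $i$-th component of $[\bar X^l_l,Y]$, giving the claimed equivalence.

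Your approach replaces this computation by the geometric observation that the PDE Lie system is the rank-$s$ distribution $\mathcal{D}=\langle\bar X^1_1,\ldots,\bar X^s_s\rangle$ (integrable by the zero-curvature condition), so that ``Lie symmetry'' amounts to $[\bar X^l_l,Y]\in\Gamma(\mathcal{D})$. Your Step~3, noting that $[\bar X^l_l,Y]$ lands in the $t$-dependent $V$-part while any nonzero section of $\mathcal{D}$ has a nontrivial $\partial/\partial t_k$-component, is a clean way to upgrade this inclusion to the equality $[\bar X^l_l,Y]=0$; the paper reaches the same conclusion implicitly through the jet computation rather than isolating this transversality fact. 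The trade-off is exactly what you identify: your route is shorter and more conceptual, but it front-loads the work into the equivalence in Step~1 (prolongation symmetry $\Leftrightarrow$ preservation of $\mathcal{D}$), which the paper avoids by computing with $\widehat{Y}$ directly from the outset.
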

\begin{proof}
The coordinate systems $\{t_1,\ldots,t_s\}$ on $\mathbb{R}^s$ and $\{x_1,\ldots,x_n\}$ on $\mathbb{R}^n$ induce a coordinate system on the fiber bundle $J^1\pi\simeq \mathbb{R}^s\times \mathbb{R}^ n\times\mathbb{R}^{ns}$ with respect to the projection
$\pi:(t_l,x_j)\in\mathbb{R}^s\times\mathbb{R}^n\mapsto (t_l)\in \mathbb{R}^s$ of the form $t_l,x_j,x_{j,l}$, with $1\leq j\leq n$ and $1\leq l\leq s$.
A vector field $Y=\sum_{k=1}^n\eta_k(t,x)\partial/\partial x_k,$ $t\in \mathbb{R}^s$ and $x\in \mathbb{R}^n$, is a Lie symmetry of (\ref{pdesys}) if and only if
$$\widehat{Y}F^i_l=0,\qquad\qquad  i=1,\ldots, n,\qquad l=1,\ldots,s$$
on the submanifold $\mathcal{S}=\cap_{i=1}^n\cap_{l=1}^s (F^i_l)^{-1}(0)$, with $\widehat{Y}:J^1\pi\rightarrow {\rm T}(J^1\pi)$ being the prolongation of $Y$, namely
$$
\widehat{Y}=\sum_{k=1}^n \left[\eta_k\frac{\partial}{\partial x_k}+\sum_{q=1}^s\left(\frac{\partial \eta_k}{\partial t_q}+\sum_{j=1}^n\frac{\partial \eta_k}{\partial x_j}x_{j,q}\right)\frac{\partial}{\partial x_{k,q}}\right],
$$and
$$
\begin{gathered}
F_l^i=x_{i,l}-\sum_{\alpha=1}^rb_{\alpha l}(t)X_{\alpha}^i(x),\qquad i=1,\ldots,n,\qquad l=1,\ldots,s,\\
X_\alpha=\sum_{i=1}^nX_\alpha^i(x)\frac{\partial}{\partial x_i},\qquad\alpha=1,\ldots,r.
\end{gathered}
$$
By assumption, $\eta_k(t,x)=\sum_{\beta=1}^rf_{\beta}(t)X_{\beta}^k(x)$ for $k=1,\ldots,n$. So,
\begin{equation*}
 \widehat{Y}=\sum_{k=1}^n\sum_{\beta=1}^r\left[ f_{\beta}(t)X_{\beta}^k\frac{\partial}{\partial x_k}+\sum_{q=1}^s \left(\frac{\partial f_{\beta}(t)}{\partial t_q}X_{\beta}^k+\sum_{j=1}^nf_{\beta}(t)\frac{\partial X_{\beta}^k}{\partial x_j}x_{j,q}\right) \frac{\partial}{\partial x_{k,q}}\right].
\end{equation*}
Substituting this in $\widehat{Y}F^i_l=0$, we obtain
\begin{eqnarray*}
\sum_{k=1}^n \sum_{\beta=1}^r\left[f_{\beta}(t)X_{\beta}^k\left(-\sum_{\alpha=1}^rb_{\alpha l}(t)\frac{\partial X_{\alpha}^i}{\partial x_k}\right)+ \sum_{q=1}^s\left(\frac{\partial f_{\beta}(t)}{\partial t_q}X_{\beta}^k+\sum_{j=1}^n{f_{\beta}(t)\frac{\partial X_{\beta}^k}{\partial x_j}x_{j,q}}\right)\delta^{l}_q\delta^i_k\right]\\
=\sum_{\beta=1}^r\left[ -\sum_{k=1}^n \sum_{\alpha=1}^r f_{\beta}(t)b_{\alpha l}(t)X_{\beta}^k\frac{\partial X_{\alpha}^i}{\partial x_k}+\frac{\partial f_{\beta}(t)}{\partial t_l}X_{\beta}^i+\sum_{j=1}^n f_{\beta}(t)\frac{\partial X_{\beta}^i}{\partial x_j}x_{j,l}\right]=0.
\end{eqnarray*}
Restricting the above expression to the submanifold $\mathcal{S}=\cap_{i=1}^n\cap_{l=1}^s (F^i_l)^{-1}(0)$ and renaming indexes appropriately, we obtain
\begin{equation*}
 \sum_{\beta=1}^r \left[\frac{\partial f_{\beta}(t)}{\partial t_l}X_{\beta}^i+ \sum_{k=1}^n \sum_{\alpha=1}^r\left(f_{\beta}(t)\frac{\partial X_{\beta}^i}{\partial x_k}b_{\alpha l}(t)X_{\alpha}^k-f_{\beta}(t)b_{\alpha l}(t)X_{\beta}^k\frac{\partial X_{\alpha}^i}{\partial x_k}\right)\right]=0.
\end{equation*}
Hence,
\begin{equation*}
\sum_{\beta=1}^r \left[ \frac{\partial f_{\beta}(t)}{\partial t_l}X_{\beta}^i+\sum_{\gamma=1}^r\sum_{k=1}^n f_{\beta}(t)b_{\gamma l}(t)\left(\frac{\partial X_{\beta}^i}{\partial x_k}X_{\gamma}^k-X_{\beta}^k\frac{\partial X_{\gamma}^i}{\partial x_k}\right)\right]=0,
\end{equation*}
whose right-hand side becomes, for each fixed $l$, the coefficients in the basis $\partial/\partial x^i$, with $i=1,\ldots,n$, of $[\bar X^l_l,Y]$. So, the above amounts to
\begin{multline*}
 \left[\bar{X}^l_l,Y\right]=\left[\frac{\partial}{\partial t_l}+\sum_{k=1}^n \sum_{\gamma=1}^r b_{\gamma l}(t)X_{\gamma}^k\frac{\partial}{\partial x_k},\sum_{i=1}^n \sum_{\beta=1}^r f_{\beta}(t)X_{\beta}^i \frac{\partial}{\partial x_i}\right]\\=\left[\frac{\partial}{\partial t_l}+\sum_{\gamma=1}^r b_{\gamma l}(t)X_\gamma,\sum_{\beta=1}^r f_{\beta}(t)X_\beta\right]=0.
\end{multline*}
Then, $Y$ is a Lie symmetry of (\ref{pdesys}) if and only if the condition
\begin{equation}\label{symcond2}
 \left[\bar{X}^l_l,Y\right]=0
\end{equation}
is satisfied for $l=1,\ldots,s.$
\end{proof}

\begin{theorem}\label{MTPDELieSystems}
Given a Lie symmetry of the form (\ref{SymSpe}) for the system (\ref{pdesys}), the
coefficients $f_1(t),\dots,f_r(t)$ satisfy a PDE Lie system admitting a Vessiot--Guldberg Lie algebra $V^{S}\simeq V/Z(V)$, where we recall that $V$ is a Vessiot--Guldberg Lie algebra for (\ref{pdesys}).
\end{theorem}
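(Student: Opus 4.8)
The plan is to begin from the symmetry characterization of Theorem \ref{the:MainPDE}: the vector field $Y(t,x)=\sum_{\beta=1}^rf_\beta(t)X_\beta(x)$ is a Lie symmetry of (\ref{pdesys}) if and only if $[\bar X^l_l,Y]=0$ for $l=1,\ldots,s$. First I would expand these brackets. Since the $X_\beta$ do not depend on $t$ and $[X_\gamma,X_\beta]=\sum_\mu c_{\gamma\beta\mu}X_\mu$,
\[
[\bar X^l_l,Y]=\sum_{\mu=1}^r\Big(\frac{\partial f_\mu}{\partial t_l}+\sum_{\gamma,\beta=1}^rb_{\gamma l}(t)f_\beta(t)c_{\gamma\beta\mu}\Big)X_\mu,
\]
and, since $X_1,\ldots,X_r$ are linearly independent over $\mathbb{R}$, the vanishing of all these brackets is equivalent to the system
\[
\frac{\partial f_\mu}{\partial t_l}=\sum_{\gamma,\beta=1}^rb_{\gamma l}(t)f_\beta(t)c_{\beta\gamma\mu},\qquad \mu=1,\ldots,r,\quad l=1,\ldots,s.
\]
Recalling the vector fields $Y_\gamma=\sum_{\beta,\mu=1}^rf_\beta c_{\beta\gamma\mu}\,\partial/\partial f_\mu$ introduced in Theorem \ref{The:SLS}, this is precisely $\partial f/\partial t_l=\Gamma_l(f)$ with $\Gamma_l\equiv\sum_{\gamma=1}^rb_{\gamma l}(t)Y_\gamma$.

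Next I would invoke the portion of the proof of Theorem \ref{The:SLS} establishing that $Y_1,\ldots,Y_r$ span a Lie algebra $V_L$ with $[Y_\gamma,Y_\delta]=\sum_\mu c_{\gamma\delta\mu}Y_\mu$ — the same structure constants as $V$ — isomorphic to $V/Z(V)$ through $\phi(X_\alpha)=Y_\alpha$. Hence every $\Gamma_l$ takes values in $V_L$; expressing $\Gamma_l$ in a genuine basis of $V_L$ (the $Y_\gamma$ themselves need not be independent) settles the first requirement in the definition of a PDE Lie system. What remains is the zero-curvature condition, which I would phrase invariantly as $[\bar\Gamma^l_l,\bar\Gamma^k_k]=0$ for all $l\neq k$, with $\bar\Gamma^l_l=\partial/\partial t_l+\Gamma_l$; this is equivalent to the compatibility identity in the definition once that identity is read as the vanishing of the components of $[\bar\Gamma^l_l,\bar\Gamma^k_k]$ in a basis of $V_L$.

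The last step is to run the same bracket computation used for $[\bar X^l_l,\bar X^k_k]$, now with the $Y$'s replacing the $X$'s: because the $Y_\gamma$ are $t$-independent and obey the structure constants $c_{\gamma\delta\mu}$, one obtains
\[
[\bar\Gamma^l_l,\bar\Gamma^k_k]=\sum_{\mu=1}^r\Big(\frac{\partial b_{\mu k}}{\partial t_l}-\frac{\partial b_{\mu l}}{\partial t_k}+\sum_{\gamma,\delta=1}^rb_{\gamma l}(t)b_{\delta k}(t)c_{\gamma\delta\mu}\Big)Y_\mu.
\]
The coefficient of each $Y_\mu$ is exactly the left-hand side of the zero-curvature condition satisfied by (\ref{pdesys}) by hypothesis, so every coefficient vanishes and $[\bar\Gamma^l_l,\bar\Gamma^k_k]=0$. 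Consequently the symmetry system is a PDE Lie system with Vessiot--Guldberg Lie algebra $V^S=V_L\simeq V/Z(V)$, as claimed.

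I expect the only delicate point — the main, and fairly mild, obstacle — to be the bookkeeping around the possible linear dependence of $Y_1,\ldots,Y_r$: one must pass to an honest basis of $V_L$ before applying the definition of a PDE Lie system, yet this is harmless because the decisive identity $[\bar\Gamma^l_l,\bar\Gamma^k_k]=0$ is basis-free. Conceptually, the whole argument amounts to transporting the zero curvature of (\ref{pdesys}) along the Lie algebra morphism $\phi:V\to V_L$.
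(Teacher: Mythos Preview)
Your proposal is correct and follows essentially the same route as the paper: derive the symmetry system from $[\bar X^l_l,Y]=0$, recognize it as $\partial f/\partial t_l=\sum_\gamma b_{\gamma l}(t)Y_\gamma$ with the $Y_\gamma$ from Theorem \ref{The:SLS} spanning $V_L\simeq V/Z(V)$, and then check the integrability condition by observing that it reduces to the zero-curvature condition already assumed for (\ref{pdesys}). Your phrasing of the last step via $[\bar\Gamma^l_l,\bar\Gamma^k_k]=0$ is a bit more invariant than the paper's coordinate verification, and your explicit remark about passing to a genuine basis of $V_L$ is a welcome clarification the paper leaves implicit, but these are presentational rather than substantive differences.
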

\begin{proof}
Let $X_1,\ldots,X_r$ be a basis for $V$ with structure constants $c_{\alpha\beta\gamma}$. From Theorem \ref{the:MainPDE}, we have
\begin{equation*}
\begin{aligned}
\left[\bar{X}^l_l,Y\right]&=\left[\frac{\partial}{\partial t_l}+\sum_{\alpha=1}^{r}{b_{\alpha l}(t)X_{\alpha}},\sum_{\delta=1}^r f_{\delta}(t)X_{\delta}\right]=\sum_{\delta=1}^r{\left({\frac{\partial f_{\delta}}{\partial t_l}X_{\delta}}+\sum_{\alpha=1}^r{b_{\alpha l}(t)f_{\delta}[X_{\alpha},X_{\delta}]}\right)}\\
&=\sum_{\pi=1}^r{\left(\frac{\partial f_\pi}{\partial t_l}+\sum_{\alpha=1}^r \sum_{\delta=1}^r b_{\alpha l}(t)f_{\delta}c_{\alpha \delta \pi}\right)X_\pi}=0.
\end{aligned}
\end{equation*}
Since $X_1,\dots,X_r$ are linearly independent over $\mathbb{R}$ and the coefficients of the above expression are only $t$-dependent, we get that the above amounts to
\begin{equation}\label{PDESymSys}
\frac{\partial f_\pi}{\partial t_l}=\sum_{\alpha,\delta=1}^r b_{\alpha l}(t)f_{\delta}c_{\delta\alpha  \pi},\qquad \pi=1,\ldots,r,\quad l=1,\ldots,s.
\end{equation}

To prove that this is a PDE Lie system, we define the vector fields
$$
Y_\alpha=\sum_{\delta,\pi=1}^rc_{\delta \alpha \pi}f_\delta\frac{\partial}{\partial f_\pi},\qquad \alpha=1,\ldots,r.
$$
We have already proved that $[Y_\alpha,Y_\beta]=\sum_{\delta=1}^rc_{\alpha\beta\delta}Y_\delta$ in Theorem \ref{The:SLS}. So, these vector fields span a Lie algebra isomorphic to $V/Z(V)$ (for a proof of this fact follow the same line of reasoning as in Theorem \ref{The:SLS}).
In terms of these vector fields, we see that (\ref{PDESymSys}) is related to the $t$-dependent vector fields $X_l(t,x)=\sum_{\alpha=1}^rb_{\alpha l}(t)X_\alpha(x)$, with $l=1,\ldots,s$. Additionally, to be a PDE Lie system, the above system (\ref{PDESymSys}) must satisfy
the condition
\begin{equation}\label{conint}
\sum_{\alpha,\mu=1}^r\left(\frac{\partial b_{\alpha \pi}}{\partial \kappa}-\frac{\partial b_{\alpha \kappa}}{\partial \pi}+\sum_{\delta,\epsilon=1}^rb_{\delta \kappa}b_{\epsilon \pi}c_{\delta\epsilon\alpha}\right)f_\mu c_{\mu\alpha\sigma}=0,\qquad \kappa\neq \pi=1,\ldots,r,\qquad \sigma=1,\ldots,r.
\end{equation}
The expression in brackets vanishes due to the integrability condition for system (\ref{pdesys}). Hence, (\ref{PDESymSys}) is a PDE Lie system. We call it the {\it symmetry system} for (\ref{pdesys}) relative to $V$.
\end{proof}

\begin{definition}
 Given a PDE Lie system $X$ with a Vessiot--Guldberg Lie algebra $V$, we call $\mathcal{S}_X^{V}$
the space of Lie symmetries of $X$ that are also $t$-dependent vector fields taking values in $V$.
\end{definition}

We can straightforwardly prove that the space $\mathcal{S}_X^{V}$ is a Lie algebra.

\section{Lie symmetries for $\mathfrak{sl}(2,\mathbb{R})$-PDE Lie systems}
An $\mathfrak{sl}(2,\mathbb{R})$-PDE Lie system is a PDE Lie system admitting a Vessiot--Guldberg Lie algebra isomorphic to $\mathfrak{sl}(2,\mathbb{R}).$
Let us obtain the elements of $\mathcal{S}_X^V$ for this case. 

Let us choose a basis of vector fields $\{X_1,X_2,X_3\}$ for $V$ satisfying the same commutation relations as in \eqref{ComSl2}.
Let us write a general PDE Lie system whose autonomization for a fixed value $l$ is
\begin{equation*}
 \bar{X}^l_l=\frac{\partial}{\partial t_l}+b_{1l}(t)X_1+b_{2l}(t)X_2+b_{3l}(t)X_3,\quad t=(t_1,\ldots,t_s) \in \mathbb{R}^s,\quad 1\leq l\leq s
\end{equation*}
and a certain type of possible Lie symmetry $Y=f_1(t)X_1+f_2(t)X_2+f_3(t)X_3$, where $f_1(t),f_2(t),f_3(t)$ are $t$-dependent functions to be determined by the symmetry condition (\ref{symcond2}).
This leads us
to the system of $s$ first-order PDEs
\begin{equation}
\left\{\begin{aligned}
&\frac{\partial f_1}{\partial t_l}=b_{2l}(t)f_1-b_{1l}(t)f_2,\\
 &\frac{\partial f_2}{\partial t_l}=2(b_{3l}(t)f_1-b_{1l}(t)f_3),\\
 & \frac{\partial f_3}{\partial t_l}=b_{3l}(t)f_2-b_{2l}(t)f_3,
\end{aligned}\right.
\end{equation}
with $l=1,\ldots,s$. Expressed in terms of $t$-dependent vector fields, for a fixed value of $l$
\begin{equation*}
\Gamma^{\mathfrak{sl}(2,\mathbb{R})}_{l}=b_{1l}(t)Y_{1}+b_{2l}(t)Y_{2}+b_{3l}(t)Y_{3}, \quad l=1,\ldots,s,
\end{equation*}
where
\begin{equation}\label{eq2}
 Y_{1}=-f_2\frac{\partial}{\partial f_1}-2f_3\frac{\partial}{\partial f_2},\quad Y_{2}=f_1\frac{\partial}{\partial f_1}-f_3\frac{\partial}{\partial f_3},\quad Y_{3}=2f_1\frac{\partial}{\partial f_2}+f_2\frac{\partial}{\partial f_3}
\end{equation}
close the commutation relations in (\ref{ComSl2}). So, they span a Lie algebra isomorphic to $\mathfrak{sl}(2,\mathbb{R})$.

\subsection{The partial Riccati equation}
Let us consider the partial Riccati equation, i.e., the PDE system
\begin{equation}\label{PDERicc}
 \frac{\partial x}{\partial t_1}=b_{11}(t)+b_{21}(t)x+b_{31}(t)x^2,\qquad
\frac{\partial x}{\partial t_2}=b_{12}(t)+b_{22}(t)x+b_{32}(t)x^2,
\end{equation}
with the $t$-dependent coefficients satisfying the appropriate integrability condition (\ref{conint}).
Such systems appear in the study of WZNW equations and multidimensional Toda systems \cite{FGRSZ99}. Observe that the partials $\partial x/\partial t_1$ and $\partial x/\partial t_2$ are related to the $t$-dependent vector fields
${X}^{\rm {pRic}}_{t_1}=b_{11}(t)X_{1}^{\rm {pRic}}+b_{21}(t)X_{2}^{\rm {pRic}}+b_{31}(t)X_{3}^{\rm {pRic}}$ and
${X}^{\rm {pRic}}_{t_2}=b_{12}(t)X_{1}^{\rm {pRic}}+b_{22}(t)X_{2}^{\rm {pRic}}+b_{32}(t)X_{3}^{\rm {pRic}}$,
with $$X_{1}^{\rm {pRic}}=\frac{\partial}{\partial x},\quad X_{2}^{\rm {pRic}}=x\frac{\partial}{\partial x},\quad X_{3}^{\rm {pRic}}=x^2\frac{\partial}{\partial x}$$
satisfying the commutation relations (\ref{ComSl2}). That is, the vector fields $\langle X_{1}^{\rm {pRic}},X_{2}^{\rm {pRic}},X_{3}^{\rm {pRic}}\rangle$ span a Vessiot--Guldberg Lie algebra $V^{\rm {pRic}}\simeq \mathfrak{sl}(2,\mathbb{R}).$ Since we assume that the functions $b_{ij}(t)$ with $i=1,2,3$ and $j=1,2$ satisfy (\ref{conint}), we get that (\ref{PDERicc}) is a PDE Lie system.

Let us look for Lie symmetries of the form $Y=f_1(t)X^{\rm pRic}_{1}+f_2(t)X^{\rm pRic}_{2}+f_3(t)X^{\rm pRic}_{3}$ for (\ref{PDERicc}). In view of Theorem \ref{MTPDELieSystems}, such Lie symmetries are solutions of the system of PDEs
\begin{equation}
\left\{\begin{aligned}
&\frac{\partial f_1}{\partial t_j}={b_{2j}}(t)f_1-{b_{1j}}(t)f_2,\\
&\frac{\partial f_2}{\partial t_j}=2({b_{3j}}(t)f_1-{b_{1j}}(t)f_3),\\
&\frac{\partial f_3}{\partial t_j}={b_{3j}}(t)f_2-{b_{2j}}(t)f_3,
\end{aligned}\right.
\end{equation}
 with $j=1,2.$
This resulting system can be interpreted in terms of the $t$-dependent vector fields $\Gamma^{\rm{pRic}}_j=b_{1j}(t)Y_{1}+b_{2j}(t)Y_{2}+b_{3j}(t)Y_{3}$, with $j=1,2$ and (\ref{eq2}).
These vector fields have the same structure constants as the $X^{\rm pRic}_1,X^{\rm pRic}_2,X^{\rm pRic}_3$. Therefore, \eqref{PDERicc} is a PDE Lie system with a Vessiot--Guldberg Lie algebra isomorphic to $\mathfrak{sl}(2,\mathbb{R}).$
\subsection{Flat $\mathfrak{g}$-valued forms and their generators}
Let $G$ be a Lie group with Lie algebra $\mathfrak{g}$  and let $L_g:G\rightarrow G$ be the left-action of $G$ on $G$. We write $\omega$ for a one-form on an $n$-dimensonal manifold $N$ taking values in  $\mathfrak{g}$ and $\varphi:N\rightarrow G$ for a function on $N$ taking values in $G$. Consider the system of PDEs 
$$
\omega=L_{\varphi^{-1}*}{\rm d}\varphi
$$
determining the hereafter called generator $\varphi$ of $\omega$, where
\begin{equation}\label{conomega}
{\rm d}\omega+\omega\wedge\omega=0,
\end{equation}
and $\omega\wedge \omega$ stands for the wedge product of the Lie algebra-valued $\omega$ with itself. Equations of this type appear for instance in the study multidimensional Riccati type equations and Lie algebra valued flat connection forms.

Taking a local basis of coordinates $\{x_1,\ldots,x_n\}$ on $N$ and writing $\omega=\sum_{\mu=1}^n\omega_\mu(x) {\rm d}x_\mu$, where $\omega_\mu:N\rightarrow \mathfrak{g}$, we get
$$
\frac{\partial \varphi}{\partial x_\mu}=L_{\varphi*} \omega_\mu,\qquad \mu=1,\ldots,n.
$$
Choosing a basis $v_1,\ldots,v_r$ for $\mathfrak{g}$, we can write $\omega_\mu=\sum_{\alpha=1}^rf_{\alpha\mu}(x)v_\alpha$ for $\mu=1,\ldots,n$ and certain functions $f_{\mu\alpha}(x)$ with $\alpha=1,\ldots,r$ on $N$. Hence,
$$
\frac{\partial \varphi}{\partial x_\mu}=L_{\varphi*}\left( \sum_{\alpha=1}^rf_{\alpha\mu}(x)v_\alpha\right)=\sum_{\alpha=1}^rf_{\alpha\mu}(x)L_{\varphi*}v_\alpha, \qquad \mu=1,\ldots,n.
$$ 
Note that $L_{\varphi*} v_\alpha$ can be considered as the value at $\varphi\in G$ of the left-invariant vector field on $G$ given by $X^L_\alpha(\varphi)=L_{\varphi*} v_\alpha$. In consequence, we have
\begin{equation}\label{PDELieGroup}
\frac{\partial \varphi}{\partial x_\mu}=\sum_{\alpha=1}^rf_{\alpha\mu}(x)X^L_\alpha(\varphi), \qquad \mu=1,\ldots,n.
\end{equation}
Additionally, in view of (\ref{conomega}), we have that
$$
\frac{\partial f_{\alpha\mu}}{\partial x_\beta}-\frac{\partial f_{\alpha\beta}}{\partial x_\mu}+\sum_{\pi,\delta=1}^rf_{\pi\beta}f_{\delta\mu}c_{\pi\delta\alpha}=0, \quad \mu\neq \beta=1,\ldots,n,\quad \alpha=1,\ldots,n,
$$
and hence, (\ref{PDELieGroup}) becomes a PDE Lie system.

In terms of the basis $X^L_1,\ldots, X^L_r$ for $\mathfrak{g}$, which have structure constants $c_{\alpha\beta\gamma}$, we see that the symmetry system for (\ref{PDELieGroup}) reads as (\ref{PDESymSys}).
\section{Conclusions and Outlook}

We have constructed a natural family of Lie symmetries for Lie systems that can be described, studied and derived through the theory of Lie systems. In particular, we have shown that
these Lie symmetries can be obtained by solving another Lie system possessing a Vessiot--Guldberg Lie algebra related to the Vessiot--Guldberg Lie algebra of the Lie system under study. Our results have been generalized to the realm of HODEs and PDE Lie systems. Our procedures enable us to find Lie symmetries of isomorphic (PDE) Lie systems simultaneously.

To illustrate our theory, we have studied Lie symmetries for some systems of differential equations of physical and mathematical relevancy. As a result, we have retrieved and generalized several results concerning Lie symmetries of several types of known and new Riccati equations from a more geometrical point of view.
Moreover, we have shown that autonomous systems can also be studied through Lie systems.

In the future, we aim to introduce a more general perspective on the quaternionic Riccati equation
by imposing that the coefficients take values in the quaternionic field. Additionally, we aim to 
study the generalization of the quaternionic Riccati equation to other types of algebras and to study their geometric properties. 
Moreover, we also plan to perform a deeper study on PDE Lie systems.
All these things shall be included in forthcoming papers.

\section*{Acknowledgements}
J. de Lucas acknowledges partial financial support by research project MTM2010-12116-E. C. Sard\'on acknowledges a fellowship from the
University of Salamanca. P.G. Est\'evez and C. Sard\'on acknowledge partial financial support from research project SA226U13 by the JCyL. J. de Lucas and C. Sard\'on acknowledge partial financial support from the program HARMONIA of Polish National Science Centre
under the contract number DEC-2012/04/M/ST1/00523. F.J.~Herranz acknowledges partial financial support from
  the Spanish MINECO  under grant    MTM2010-18556 (with EU-FEDER support).

\end{document}